\title{A Polynomial Kernel for Line Graph Deletion} %TODO Please add
\author{Eduard Eiben}{Department of Computer Science, Royal Holloway University of London, Egham, United Kingdom}{eduard.eiben@rhul.ac.uk}{https://orcid.org/0000-0003-2628-3435}{}
\author{William Lochet}{Department of Informatics, University of Bergen, Bergen, Norway}{william.lochet@uib.no}{}{Supported by The Bergen Research Foundation (BFS).}
\authorrunning{E. Eiben and W. Lochet}%TODO mandatory. First: Use abbreviated first/middle names. Second (only in severe cases): Use first author plus 'et al.' , and S. Saurabh
\keywords{
	Kernelization, 
	line graphs, 
	$H$-free editing, 
	graph modification problem
}%TODO mandatory; please add comma-separated list of keywords
\newcommand{\bigO}[1]{\ensuremath{{\mathcal O}\left(#1\right)}}
\newcommand{\hedit}{\textsc{$H$-free-Edge Editing}}
\newcommand{\ldelete}{\textsc{Line-Graph-Edge Deletion}}
\newcommand{\cliqueWitness}{clique partition witness}
\theoremstyle{plain}
\newcommand{\NP}{\textsf{NP}}
\newcommand{\coNP}{\textsf{coNP}}
\newcommand{\NPpoly}{\textsf{NP/poly}}
\newtheorem{rrule}{Reduction Rule}
\newtheorem{observation}[theorem]{Observation}
\newcommand{\CCC}{\mathcal{C}}
\newcommand{\DDD}{\mathcal{D}}
\newcommand{\LLL}{\mathcal{L}}
\newcommand{\NNN}{\mathcal{N}}
\newcommand{\distance}{$\LLL$-distance}
\newcommand{\distC}{\ensuremath{\operatorname{dist}^{\LLL}}}
\begin{document}
	
\maketitle
\begin{abstract}\label{Abstract}
	The line graph of a graph $G$ is the graph $L(G)$ whose vertex set is the edge set of $G$ and there is an edge between $e,f\in E(G)$ if $e$ and $f$ share an endpoint in $G$. A graph is called line graph if it is a line graph of some graph. We study the \ldelete{} problem, which asks whether we can delete at most $k$ edges from the input graph $G$ such that the resulting graph is a line graph. More precisely, we give a polynomial kernel for \ldelete{} with $\bigO{k^{5}}$ vertices. This answers an open question posed by Falk H\"{u}ffner at Workshop on Kernels (WorKer) in 2013.
\end{abstract}

\section{Introduction}\label{Section: Introduction}
For a family $\mathcal{G}$ of graphs, the general $\mathcal{G}$-\textsc{Graph Modification} 
problem asks whether we can modify a graph $G$ into a graph in $\mathcal{G}$ by
performing at most $k$ simple operations. Typical examples of simple operations
well-studied in the literature include vertex deletion, edge deletion, edge addition, or
a combination of edge deletion and addition. We call these problems
$\mathcal{G}$-\textsc{Vertex Deletion}, $\mathcal{G}$-\textsc{Edge Deletion},
$\mathcal{G}$-\textsc{Edge Addition}, and $\mathcal{G}$-\textsc{Edge Editing},
respectively. By a classical result by Lewis and Yannakakis~\cite{LewisY80},
$\mathcal{G}$-\textsc{Vertex Deletion} is \NP-complete for all non-trivial hereditary
graph classes. The situation is quite different for the edge modification problems.
Earlier efforts for edge deletion problems~\cite{ElmallahC88,Yannakakis81}, though
having produced fruitful concrete results, shed little light on a systematic answer, and
it was noted that such a generalization is difficult to obtain.

$\mathcal{G}$-\textsc{Graph Modification} problems have been extensively investigated for
graph classes $\mathcal{G}$ that can be characterized by a finite set of forbidden
induced subgraphs. We say that a graph is $\mathcal{H}$-free if it contains 
none of the graphs in $\mathcal{H}$ as an induced subgraph.
For this special case, the
\textsc{$\mathcal{H}$-free Vertex Deletion} is well understood. If $\mathcal{H}$
contains a graph on at least two vertices, then all of these problems are \NP-complete,
but admit a $c^kn^{\bigO{1}}$ algorithm~\cite{Cai96}, where $c$ is the size of the largest
graph in $\mathcal{H}$ (the algorithms with running time $f(k)n^{\bigO{1}}$ are called
fixed-parameter tractable (FPT) algorithms~\cite{CyganFKLMPPS15,DowneyFellows13}). On
the other hand, the \NP-hardness proof of Lewis and Yannakakis~\cite{LewisY80} excludes
algorithms with running time $2^{o(k)}n^{\bigO{1}}$ under the Exponential Time Hypothesis
(ETH)~\cite{ImpagliazzoP01}. Finally, as observed by Flum and Grohe~\cite{FlumG06} a
simple application of sunflower lemma~\cite{ErdosR60} gives a \emph{kernel} with
$\bigO{k^{c}}$ vertices, where $c$ is again the size of the largest graph in
$\mathcal{H}$. A kernel is a polynomial time preprocessing algorithm which outputs an
equivalent instance of the same problem such that the size of the reduced instance is
bounded by some function $f(k)$ that depends only on $k$. We call the function $f(k)$
the size of the kernel. It is well-known that any problem that admits an FPT algorithm
admits a kernel. Therefore, for problems with FPT algorithms one is interested in
polynomial kernels, i.e., kernels whose size is a polynomial function.

For the edge modification problems, the situation is more complicated. While all of
these problems also admit $c^kn^{\bigO{1}}$ time algorithm, where $c$ is the maximum
number of edges in a graph in $\mathcal{H}$~\cite{Cai96}, the \textsf{P} vs \NP\
dichotomy is still not known. Only recently Aravind et al.~\cite{AravindSS17b} gave the
dichotomy for the special case when $\mathcal{H}$ contains precisely one graph
$H$. From the kernelization point of view, the situation is also
more difficult. The reason is that deleting or adding an edge to a graph can introduce a
new copy of $H$ and this might further propagate. Hence, we cannot use the sunflower
lemma to reduce the size of the instance. Cai asked the question whether
\textsc{$H$-free Edge Deletion} admits a polynomial kernel for all graphs $H$~\cite{bodlaender2006open}. 
Kratsch and Wahlstr{\"{o}}m~\cite{KratschW13} showed
that this is probably not the case and gave a graph $H$ on $7$ vertices such that 
\textsc{$H$-free Edge Deletion} and \textsc{$H$-free Edge Editing} does not admit a
polynomial kernel unless $\coNP\subseteq \NPpoly$. Consequently, it was shown that this 
is not an exception, but rather a rule~\cite{CaiCai15,GuillemotHPP13}. Indeed the result by Cai and 
Cai~\cite{CaiCai15} shows that \textsc{$H$-free Edge Deletion}, \textsc{$H$-free Edge Addition}, 
and \hedit\ do not admit a polynomial kernel whenever $H$ or its complement is a path or a cycle 
with at least $4$ edges or a $3$-connected graph with at least $2$ edges missing. Very recently, Marx and Sandeep~\cite{MarxSandeep20} gave a list of nine graphs, all on $5$ vertices such that if \hedit\ does not admit a kernel for any of these nine graphs under standard complexity assumptions, then \hedit\ admits a polynomial kernel for $|H|\ge 5$ if and only if $H$ is either empty or complete graph. They also provided a similar characterization for \textsc{$H$-free Edge Deletion} and \textsc{$H$-free Edge Editing}. 
This suggests that 
actually the $H$-free edge modification problems with a polynomial kernels are rather rare and only for
small graphs $H$. Recently, Eiben, Lochet, and Saurabh~\cite{EibenLochetSaurabhArXiv} announced a polynomial kernel for the case when $H$ is a paw, which leaves only one last graph on $4$ vertices for which the kernelization of $H$-free edge
modification problems remains open, namely $K_{1,3}$ known also as the claw.

The class of claw-free graphs is a very well studied class of graphs with some interesting algorithmic properties. The most prominent example is probably the algorithm of Sbihi~\cite{SBIHI198053} for computing the maximum independent set in polynomial time. It also has been extensively studied from a structural point of view, and Chudnosky and Seymour proposed, after a series of papers, a complete characterization of claw-free graphs \cite{CHUDNOVSKY2008839}. Because of such a characterization, it seems reasonable to believe that a polynomial kernel for \textsc{Claw-free Edge Deletion} exists. However, the characterization of Chudnosky and Seymour is quite complex, which makes it hard to use. For this reason, as noted by Cygan et al.~\cite{CyganPPLW17}, trying to show the existence of a polynomial kernel in the cases of sub-classes of claw-free graphs seems like a good first step to try to understand this problem. In this paper, we prove the result for the most famous such class, line graphs.

\begin{restatable}{theorem}{polykernelthm}
	\label{thm:main}
\textsc{Line-Graph Edge Deletion} admits a kernel with $\bigO{k^{5}}$ vertices. 
\end{restatable}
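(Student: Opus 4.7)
The plan is to follow a modulator-plus-structure approach. First, I would invoke Beineke's characterisation: a graph is a line graph if and only if it contains none of nine fixed forbidden induced subgraphs, each on at most six vertices. A greedy maximal packing of pairwise vertex-disjoint induced copies of these obstructions yields either more than $k$ copies (in which case we report NO, since vertex-disjoint obstructions force pairwise-disjoint edge deletions and each obstruction requires at least one deletion), or a modulator $X \subseteq V(G)$ with $|X| = \bigO{k}$ such that $G - X$ is a line graph.

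By the Whitney--Krausz theorem, the edges of $G - X$ partition into cliques $\CCC = \{C_1, \dots, C_m\}$ with every vertex lying in at most two of them; this partition is essentially unique and polynomial-time computable. Each vertex $v \in V(G) \setminus X$ then has a \emph{type} given by (i) the (unordered) set of at most two cliques of $\CCC$ it belongs to and (ii) its neighbourhood $N(v) \cap X$. Two vertices with the same type that both lie in a single Krausz clique (the simplicial case) correspond to pendant edges at a common vertex of the root multigraph, so adding or removing one preserves the line-graph property. Since any solution of size at most $k$ is incident with at most $2k$ vertices, whenever a type has more than $\bigO{k}$ such simplicial representatives we can safely drop the excess ones.

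The kernel-size bound then reduces to counting surviving types. A sunflower-style reduction caps the number of relevant $X$-neighbourhoods per clique at $\bigO{k^2}$. Independently, the number of Krausz cliques that can possibly be touched by a $k$-edge solution (those meeting the neighbourhood of $X$ or those that must be split by a deleted edge) is $\bigO{k^2}$; the remaining cliques are ``interior'', consist only of symmetric simplicial vertices, and can be contracted to a single representative each. Multiplying $\bigO{k^2}$ relevant cliques by $\bigO{k^2}$ relevant neighbourhoods by $\bigO{k}$ twin representatives yields the desired $\bigO{k^5}$ bound.

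The hardest step is the treatment of \emph{non-simplicial} types: in a simple line graph, two non-simplicial vertices that share the same pair of Krausz cliques would correspond to parallel edges of the root graph, so true twins do not occur freely and the simplicial reduction does not directly apply. One therefore has to analyse how an optimal solution can reroute such ``boundary'' vertices between adjacent cliques, and bound, via a structural charging argument against the modulator $X$ and the deleted edges, the number of non-simplicial vertices that can be relevant to any solution. Controlling this charging, together with verifying that the local edge-deletions in each affected clique admit only polynomially many canonical patterns, is where the bulk of the technical work will lie.
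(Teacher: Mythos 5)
Your overall skeleton --- a modulator obtained from a greedy packing of Beineke obstructions, the Krausz clique partition of the line-graph part, and a count of ``relevant'' cliques times vertices kept per clique --- matches the paper's. However, the proposal has a genuine gap at exactly the point where the paper spends most of its effort. You assert that the number of Krausz cliques that can be touched by a $k$-edge solution is $\bigO{k^2}$, namely ``those meeting the neighbourhood of $X$ or those that must be split by a deleted edge.'' The second part of that description is circular: the whole difficulty of edge deletion (as opposed to vertex deletion) is that deleting an edge inside one clique can create new obstructions involving adjacent cliques, forcing further deletions that propagate far from the modulator, so there is no a priori bound on which cliques ``must be split.'' The paper's answer is Section~\ref{sec:boundingDistance}: it defines an $\LLL$-distance of cliques from $S$ via chains of \emph{small} cliques (cliques of size at least $k+7$ act as barriers by Lemma~\ref{lem:replaceLarge}), and then proves, through the bad-triangle elimination of Lemma~\ref{lem:noBadTriangles} and the exchange arguments of Lemmas~\ref{lem:oneVertex} and~\ref{lem:distance5cliques}, that some optimal solution leaves every clique at $\LLL$-distance at least $5$ completely intact. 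That yields $\bigO{k^4}$ relevant cliques (not $\bigO{k^2}$), each trimmed to $k+7$ vertices, hence $\bigO{k^5}$. Your closing paragraph essentially concedes that this propagation-control argument is still to be found; without it the kernel bound does not follow, and a ``sunflower-style'' cap cannot substitute for it --- the paper explicitly notes that sunflower arguments fail for edge-modification problems precisely because deletions create new copies of the forbidden subgraphs.

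A secondary, fixable issue: you take a maximal \emph{vertex}-disjoint packing of obstructions, which only guarantees that $G-X$ is a line graph. The paper packs \emph{edge}-disjoint obstructions, so that every residual obstruction has at least two vertices in $S$, giving the stronger property that $G-(S\setminus\{v\})$ is a line graph for each $v\in S$. This stronger property is what lets one bound the neighbourhood of each modulator vertex in $G-S$ by two cliques of $\CCC$ plus at most six extra vertices (Lemma~\ref{lem:neighborhoodOfS}), which in turn bounds the number of cliques at the first level; with only the vertex-disjoint guarantee, a modulator vertex could meet arbitrarily many cliques of $\CCC$ and your count of ``cliques meeting the neighbourhood of $X$'' would already be unbounded.
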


\subsubsection*{Overview of the Algorithm} 

As the first step of the kernelization algorithm, we use the characterization of line graphs by forbidden induced subgraphs
to find a set $S$ of at most $6k$ vertices such that for every vertex $v\in S$, $G-(S\setminus \{v\})$ is a line graph. This is simply done by a greedy edge-disjoint packing of forbidden induced subgraphs. Having the set $S$, we use the algorithm by Degiorgi and Simon~\cite{DegiorgiSimon95} to find a partition of edges of $G-S$ into cliques such that each vertex is in precisely $2$ cliques. Let $\mathcal{C} = \{C_1,\ldots, C_q\}$ be the cliques in the partition. Since $G-(S\setminus \{v\})$ is also a line graph, it is a rather simple consequence of Whitney's isomorphism theorem that the neighborhood of $v$ can be covered by constantly many cliques of $\mathcal{C}$. Furthermore, we will show that if a clique $C$ in $\mathcal{C}$ has more than $k+7$ vertices then the optimal solution does not contain an edge in $C$. Hence, we can partition the cliques in $\CCC$ into two groups ``large'' and ``small''. Note that if the optimal solution contains an edge in some small clique $C$, then for this change to be necessary, it has to be propagated from $S$ by modifying small cliques on some clique-path from $S$ to $C$ using only small cliques. We will therefore define the distance of a clique to $S$, without going into too many details in here, to be basically the length of a shortest clique-path from the clique to $S$ using only small cliques. Since there are only $\bigO{|S|}$ cliques in immediate neighborhood of $S$ and the number of cliques in the neighborhood of a small clique is bounded by its size, we obtain that there are at most $\bigO{k^d}$ cliques at distance at most $d$. Our main contribution and most technical part of our proof is to show that we can remove the edges covered by cliques at distance at least $5$ from $G$. This is covered in Section~\ref{sec:boundingDistance}. Afterwards we end up with an instance with all cliques in $\CCC$ at distance at least $5$ from $S$ being singletons. As discussed above there are only $\bigO{k^4}$ cliques at distance at most $4$ and because large cliques stay intact in any optimal solution, it suffices to keep $k+7$ vertices in each large clique, which leads to the desired kernel of size~$\bigO{k^5}$. 

\section{Preliminaries}\label{Section: Preliminaries}
We assume familiarity with the basic notations and terminologies in graph theory. We refer the reader
to the standard book by Diestel~\cite{diestel} for more information. Given a graph $G$ and a set of 
edges $F\subseteq E(G)$, we denote by $G - F$ the graph whose set of vertices is $V(G)$ 
and set of edges is the set $E(G)\setminus F$. 
Given two vertices $u,v\in V(G)$, we let the \emph{distance} between $u$ and $v$ in $G$, denoted $\operatorname{dist}_G(u,v))$, be the number of edges on a shortest path from $u$ to $v$. Furthermore, for $S\subseteq V(G)$ and $u\in V(G)$ we let $\operatorname{dist}_G(u,S)=\min_{v\in S}\operatorname{dist}_G(u,v))$. We omit the subscript $G$, if the graph is clear from the context.

\subparagraph*{Parameterized Algorithms and Kernelization.} For a detailed illustration of the following facts the reader is
referred to~\cite{CyganFKLMPPS15,DowneyFellows13}.
A \emph{parameterized problem} is a language $\Pi \subseteq
\Sigma^*\times \mathbb{N}$, where $\Sigma$ is a finite alphabet; the second
component $k$ of instances $(I,k) \in \Sigma^*\times\mathbb{N}$ is called the
\emph{parameter}. A parameterized problem $\Pi$ is
\emph{fixed-parameter tractable} if it admits a
\emph{fixed-parameter algorithm}, which decides instances $(I,k)$ of
$\Pi$ in time $f(k)\cdot |I|^{\bigO{1}}$ for some computable function
$f$.  

A \emph{kernelization} for a parameterized
problem $\Pi$ is a polynomial-time algorithm that given any instance
$(I,k)$ returns an instance $(I',k')$ such that $(I,k) \in \Pi$ if and
only if $(I',k') \in \Pi$ and such that $|I'|+k'\leq f(k)$ for some
computable function $f$. The function $f$ is called the \emph{size}
of the kernelization, and we have a polynomial kernelization if $f(k)$
is polynomially bounded in $k$. It is known that a parameterized
problem is fixed-parameter tractable if and only if it is decidable
and has a kernelization. However, the kernels implied by this fact are
usually of superpolynomial size.

A \emph{reduction rule} is an algorithm that takes as input an
instance $(I,k)$ of {a parameterized problem $\Pi$} and outputs an instance
$(I',k')$ of the same problem. We say that the reduction rule is
\emph{safe} if $(I,k)$ is a \emph{yes}-instance if and only if $(I',k')$ is a \emph{yes}-instance. In order to describe our kernelization
algorithm, we present a series of reduction rules. 

\subparagraph*{Line graphs.} Given a graph $G$, its \emph{line graph} $L(G)$ is a graph such that each vertex of $L(G)$
represents an edge of $G$ and two vertices of $L(G)$ are adjacent if and only if their
corresponding edges share a common endpoint (are incident) in $G$. It is well known that if the line graphs of two connected graphs $G_1$ and $G_2$ are isomorphic then either $G_1$ and $G_2$ are $K_3$ and $K_{1,3}$, respectively, or $G_1$ and $G_2$ are isomorphic as well~(Whitney's isomorphism theorem~\cite{Whitney32}, see also Theorem~8.3 in~\cite{harary1969graph}). We say that a graph $H$ is a line graph, if there exists a graph $G$ such that $H=L(G)$. Note that in this paper we only consider simple graphs, \emph{i.e.}, the graphs without loops or multiple edges and in particular we also only consider line graphs of simple graphs.  Formally, we then study the following parameterized problem: 

\noindent
\begin{center}
	\begin{boxedminipage}{0.98 \columnwidth}
		\ldelete\\[5pt]
		\begin{tabular}{l p{0.80 \columnwidth}}
			Input: & A graph $G=(V,E)$ and $k\in \mathbb{N}$.\\
			Parameter: & $k$.\\
			Question: & Is there a set of edges $F\subseteq E(G)$ such that $G-F$ is a line graph and $|F|\le k$. 
		\end{tabular}
	\end{boxedminipage}
\end{center}
We call a set of edges $F\subseteq V(G)$ such that $G-F$ is a line graph a \emph{solution} for $G$. A solution $F$ is \emph{optimal}, if there does not exists a solution $F'$ such that $|F'|< |F|$. To obtain our kernel, we will make use of several equivalent characterizations of line graphs. 

\begin{theorem}[see, {\em e.g.,} Theorem~8.4 in \cite{harary1969graph}]\label{thm:LGcharacterization}
	The following statements are equivalent:
	\begin{enumerate}[(1)]
		\item $G$ is a line graph.
		\item The edges of $G$ can be partitioned into complete subgraphs in such a way that no vertex lies in more than two of the subgraphs. 
		\item $G$ does not have $K_{1,3}$ as an induced subgraph, and if two odd triangles (triangles with the property that there exists another vertex adjacent to an odd number of triangle vertices) share a common edge, then the subgraph induced by their vertices is $K_4$. 
		\item None of nine graphs of Figure~\ref{fig:forbiddenCharacterization} is an induced subgraph of $G$. 
\end{enumerate}
\end{theorem}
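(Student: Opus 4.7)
The plan is to prove the four conditions equivalent by establishing $(1)\Leftrightarrow(2)$, $(1)\Rightarrow(3)$, $(3)\Rightarrow(1)$, and $(3)\Leftrightarrow(4)$; this reduces the theorem to three classical results (Krausz's, Van Rooij--Wilf's, and Beineke's), which are chained in the order above.

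For $(1)\Leftrightarrow(2)$ (Krausz's characterization), assume first that $G=L(H)$. For each $v\in V(H)$, the set $K_v$ of edges of $H$ incident to $v$ is a clique in $G$; since every pair of adjacent edges of $H$ shares exactly one endpoint, the cliques $\{K_v:v\in V(H)\}$ partition $E(G)$, and each vertex of $G$, being an edge $uv$ of $H$, belongs to exactly the two cliques $K_u$ and $K_v$. Conversely, given a partition $\CCC=\{C_1,\dots,C_m\}$ of $E(G)$ into cliques with every vertex in at most two members, build $H$ by introducing one vertex $x_i$ for each $C_i$ (and one pendant for every vertex of $G$ appearing in only one clique), and inserting the edge $x_ix_j$ for every $v\in V(G)$ contained in $C_i\cap C_j$; a direct check shows $L(H)=G$.

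For $(1)\Rightarrow(3)$, an induced $K_{1,3}$ in $L(H)$ would correspond to an edge $a$ of $H$ sharing an endpoint with three pairwise endpoint-disjoint edges, which is impossible since $a$ has only two endpoints. For the odd-triangle condition, every triangle of $L(H)$ arises either from a triangle of $H$ or from three edges of a $K_{1,3}$-star in $H$; triangles of the second type are easily seen to be \emph{even}, so two odd triangles of $L(H)$ sharing an edge must come from two triangles of $H$ sharing an edge, whose four edges span an induced $K_4$ in $L(H)$.

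The most delicate step is $(3)\Rightarrow(1)$: from condition $(3)$ one must produce a Krausz partition. The idea is that $K_{1,3}$-freeness forces the closed neighborhood of any vertex $v$ to split into at most two cliques, and the odd-triangle/$K_4$ condition is precisely what guarantees that the local partitions around adjacent vertices agree on their shared edge. One proceeds by induction on $|V(G)|$: delete a well-chosen vertex, extend the inductive clique partition using $(3)$, and verify that no conflict arises. For $(3)\Leftrightarrow(4)$, the direction $(3)\Rightarrow(4)$ is a finite inspection that each of the nine forbidden graphs either contains an induced $K_{1,3}$ or exhibits two odd triangles sharing an edge whose four vertices do not span a $K_4$. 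Its converse $(4)\Rightarrow(3)$ is Beineke's theorem and is the main obstacle of the whole proof; it is established by contrapositive via a structural case analysis on a minimal induced subgraph witnessing the failure of $(3)$, showing that such a minimal witness must be isomorphic to one of the nine listed graphs.
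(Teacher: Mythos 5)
First, a remark on scope: the paper does not prove this statement at all — it is the classical Krausz/van Rooij--Wilf/Beineke characterization quoted from Harary's book (Theorem~8.4 there), so there is no in-paper proof to compare against. Your overall architecture ($(1)\Leftrightarrow(2)$, $(1)\Rightarrow(3)\Rightarrow(1)$, $(3)\Leftrightarrow(4)$) is the standard one, and your treatment of $(1)\Leftrightarrow(2)$ and of claw-freeness in $(1)\Rightarrow(3)$ is correct.

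However, your parity argument in $(1)\Rightarrow(3)$ is backwards, and this is a genuine error rather than a slip, because the conclusion you draw from it is false. In $L(H)$ a triangle arising from a triangle $abc$ of $H$ is always \emph{even}: any further edge $e$ of $H$ meets $\{a,b,c\}$ in $0$ or $1$ vertices (meeting two of them would make $e$ one of the triangle edges), hence $e$ is adjacent to $0$ or $2$ of the three vertices $ab,bc,ca$ of that triangle of $L(H)$. It is the \emph{star} triangles (three edges of $H$ through a common vertex $v$) that can be odd — for instance an edge hanging off one of the three leaves is adjacent to exactly one of them. Consequently two odd triangles of $L(H)$ sharing an edge must both be star triangles; since the shared edge of $L(H)$ is a pair of edges of $H$ with a common endpoint, the two stars must be centered at the same vertex $v$, and their union is four edges through $v$, which indeed induce a $K_4$. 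Your version — that the two odd triangles come from two triangles of $H$ sharing an edge of $H$ — cannot be repaired: two such triangles $abc$ and $abd$ yield five edges, the corresponding triangles of $L(H)$ share only the single vertex $ab$ rather than an edge, and their union does not induce a $K_4$. Separately, the two genuinely hard implications, $(3)\Rightarrow(1)$ and $(4)\Rightarrow(3)$, are only gestured at ("delete a well-chosen vertex\dots", "a structural case analysis\dots"); as written these are placeholders rather than proofs, which is acceptable only if you intend, as the paper does, to cite these classical results rather than establish them.
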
 
\begin{figure}
		\centering
	\includegraphics[width=.9\textwidth]{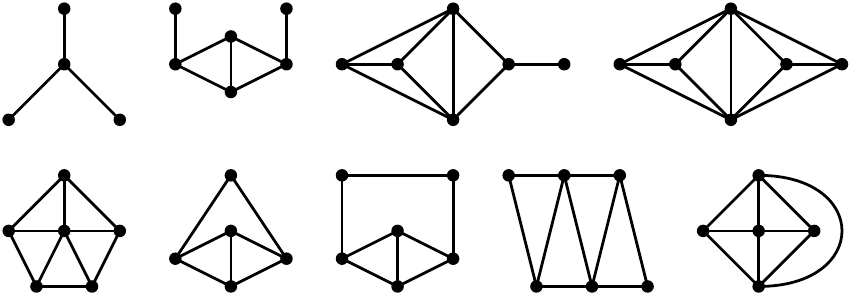}
	\caption{The nine minimal non-line graphs, from characterization of line graphs by forbidden induced subgraphs of Beineke~\cite{Beineke70}. Note that all of these graphs have at most $6$ vertices.}
	\label{fig:forbiddenCharacterization}
\end{figure}

\section{Structure of Line Graphs}

To obtain our kernel, we heavily rely on different characterizations of line graphs given by Theorem~\ref{thm:LGcharacterization}. The two main characterizations used throughout the paper are given in points (2) and (4) To ease the presentation of our techniques, we will define a notion of a \emph{\cliqueWitness} for $G$, whose existence is implied by the point (2) of Theorem~\ref{thm:LGcharacterization}.
Let $G$ be a line graph, a \emph{\cliqueWitness} for $G$ is a set $\mathcal{C} = \{C_1,\ldots, C_q\}$ be such that: 
\begin{itemize}
	\item $C_i\subseteq V(G)$ for all $i\in [q]$,
	\item $G[C_i]$ is a complete graph for all $i\in [q]$, that is every $C_i$ is a clique in $G$,
	\item $|C_i\cap C_j|\le 1$ for all $i\neq j\in [q]$,
	\item every $v\in V(G)$ is in exactly two sets in $\mathcal{C}$, and 
	\item for every edge $uv\in E(G)$ there exists exactly one set $C_i\in \mathcal{C}$ such that $\{u,v\}\subseteq C_i$.
\end{itemize}

Note that by Theorem~\ref{thm:LGcharacterization}, $G$ is a line graph if and only if there exists a \cliqueWitness\ for $G$. The following three observations follow directly from the definition of \cliqueWitness\ and will be useful throughout the paper.

\begin{observation}\label{obs:MaximalCliques}
	If $\CCC$ is \cliqueWitness\ for $G$ then every clique in $\CCC$ is either a singleton, $K_2$, or a maximal clique in $G$. 
\end{observation}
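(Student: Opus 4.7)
The plan is a short proof by contradiction. I would suppose that $C \in \CCC$ is a clique that is neither a singleton nor a $K_2$, so $|C| \geq 3$, and that $C$ is \emph{not} maximal in $G$. The goal is then to derive a contradiction with one of the defining properties of a clique partition witness, most likely with the condition that every vertex of $V(G)$ belongs to exactly two sets of $\CCC$.

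Concretely, if $C$ fails to be maximal then there exists a vertex $v\in V(G)\setminus C$ adjacent to every vertex of $C$. Since $|C|\ge 3$, I would fix three distinct vertices $a,b,c\in C$. For each of the edges $va$, $vb$, $vc$ the fifth bullet of the definition guarantees a unique clique of $\CCC$ containing its two endpoints; call these cliques $C_a$, $C_b$, $C_c$ respectively. Each of them is distinct from $C$, because $C$ does not contain $v$.

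The key step is then to argue that $C_a, C_b, C_c$ are pairwise distinct. Indeed, if for instance $C_a = C_b$, then this common clique would contain both $a$ and $b$; but $a$ and $b$ are already both in $C$, and then the two distinct cliques $C$ and $C_a$ of $\CCC$ would share at least two vertices, contradicting the third bullet of the definition (pairwise intersections of size at most one). The same argument handles $C_a=C_c$ and $C_b=C_c$. Hence $v$ lies in the three pairwise distinct cliques $C_a, C_b, C_c$, contradicting the fourth bullet, which requires $v$ to lie in exactly two cliques of $\CCC$. This contradiction completes the proof.

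There is no genuine obstacle here; the only thing to be careful about is to pull the contradiction from the correct property. The natural first instinct is to contradict maximality directly, but the definition only constrains how many cliques cover a vertex and how they intersect, so the contradiction must be phrased in those terms, as above.
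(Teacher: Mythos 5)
Your proof is correct; the paper gives no explicit proof (it states the observation "follows directly from the definition"), and your argument is exactly the natural direct verification: a witnessing vertex $v$ of non-maximality would lie in three pairwise distinct cliques of $\CCC$ (distinct because any coincidence would force two cliques of $\CCC$ to share two vertices of $C$), contradicting the requirement that every vertex lies in exactly two.
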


\begin{observation}\label{obs:MaximalCliques2}
	If $\CCC$ is \cliqueWitness\ for $G$, then every maximal clique in $G$ of size at least $4$ is in $\CCC$. 
\end{observation}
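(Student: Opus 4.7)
The plan is to show that if $K$ is a maximal clique of $G$ with $|K|\ge 4$, then $K$ must coincide with some clique of $\CCC$. I will proceed by picking two vertices $u,v\in K$, following the unique clique $C_1\in \CCC$ that contains the edge $uv$, and arguing that every other vertex of $K$ must also lie in $C_1$. Maximality of $K$ combined with $K\subseteq C_1$ and the fact that $C_1$ is a clique of $G$ will then force $K=C_1$, which suffices.

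First I would fix an arbitrary third vertex $w\in K$ and suppose for contradiction that $w\notin C_1$. Since $w$ lies in exactly two sets of $\CCC$, call them $D$ and $E$, and since the edges $uw$ and $vw$ must each be covered by some clique of $\CCC$ containing $w$, each of $uw, vw$ lies in $D$ or $E$. If both $uw$ and $vw$ lay in the same set, say $D$, then $u,v\in D$ would give a second clique of $\CCC$ covering the edge $uv$, contradicting uniqueness. So after relabeling $u\in D$ and $v\in E$, with $D\ne E$ and $D,E\ne C_1$; in particular the two cliques of $\CCC$ containing $u$ are $C_1,D$, and the two containing $v$ are $C_1,E$.

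Now I would invoke the assumption $|K|\ge 4$ to pick a fourth vertex $x\in K\setminus\{u,v,w\}$ and examine where the edges $ux,vx,wx$ sit. The edge $wx$ must lie in $D$ or $E$; by symmetry assume $x\in D$. Then $ux\in D$ as well, and the edge $vx$ must lie in $C_1$ or $E$. In the first case $x\in C_1\cap D$, but then $C_1\cap D\supseteq\{u,x\}$, violating $|C_i\cap C_j|\le 1$. In the second case $x\in D\cap E$, but then $D\cap E\supseteq\{w,x\}$, again a violation. Either branch is a contradiction, so $w\in C_1$, and since $w$ was arbitrary we conclude $K\subseteq C_1$.

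The only subtle step, and the point where the hypothesis $|K|\ge 4$ is genuinely used, is the derivation in the previous paragraph: with only three vertices $u,v,w$ one can consistently distribute the three edges among cliques $C_1,D,E$ (this is exactly the situation of a triangle of $G$ split into three edges, i.e.\ three $K_2$'s, by $\CCC$), and no contradiction arises. The fourth vertex is what forces two cliques of $\CCC$ to overlap in more than one vertex. Once $K\subseteq C_1$ is established, Observation~\ref{obs:MaximalCliques} (or directly the fact that $C_1$ is a clique of $G$) together with maximality of $K$ gives $K=C_1\in\CCC$.
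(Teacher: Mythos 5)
Your proof is correct: the paper states this observation without proof (as "following directly from the definition" of a \cliqueWitness), and your argument is a complete and valid elaboration of exactly the intended reasoning — a fourth vertex of $K$ forces two members of $\CCC$ to share two vertices unless all of $K$ lies in the single clique $C_1$ covering $uv$, after which maximality gives $K=C_1$. The only cosmetic remark is that the intermediate assertion "then $ux\in D$ as well" itself needs the same one-line justification (otherwise $C_1\cap D\supseteq\{u,x\}$), but your subsequent case analysis on $vx$ already covers both possibilities, so nothing is missing.
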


\begin{observation}\label{obs:MaximalCliques3}
	If $\CCC$ is \cliqueWitness\ for $G$, then any clique of $G$ which is not a sub-clique of some element of $\CCC$ is a triangle. 
\end{observation}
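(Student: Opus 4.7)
The plan is to show that any clique $K$ of $G$ that fails to be a sub-clique of any $C_i \in \CCC$ must have exactly three vertices, by squeezing $|K|$ from both sides using the definition of \cliqueWitness\ and Observation~\ref{obs:MaximalCliques2}.

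For the lower bound $|K| \geq 3$, I would appeal directly to the defining properties of a \cliqueWitness. If $|K|=1$, say $K=\{v\}$, then since every vertex lies in exactly two sets of $\CCC$, the singleton $\{v\}$ is trivially a sub-clique of either of those two sets. If $|K|=2$, say $K=\{u,v\}$ is an edge, then the fifth bullet of the definition guarantees a (unique) $C_i \in \CCC$ with $\{u,v\} \subseteq C_i$, so $K$ is again a sub-clique of an element of $\CCC$. Hence the hypothesis forces $|K|\ge 3$.

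For the upper bound $|K|\le 3$, I would extend $K$ to a maximal clique $K'$ of $G$, so $K \subseteq K'$ and $|K'|\ge |K|$. If $|K|\ge 4$, then $|K'|\ge 4$, so by Observation~\ref{obs:MaximalCliques2} we have $K'\in \CCC$, and then $K$ is a sub-clique of $K'$, contradicting the hypothesis. Therefore $|K|\le 3$, and combining with the previous paragraph gives $|K|=3$, i.e.\ $K$ is a triangle.

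There is no real obstacle here: the observation is a short bookkeeping consequence of Observation~\ref{obs:MaximalCliques2} together with the two defining properties of a \cliqueWitness\ that describe which singletons and edges are covered. The only subtlety worth stating explicitly is that extending $K$ to a maximal clique does not change the relevant size threshold, since $|K'|\ge |K|\ge 4$ lets Observation~\ref{obs:MaximalCliques2} apply unchanged.
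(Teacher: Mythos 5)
Your proof is correct: the paper gives no explicit argument for this observation (it states that it ``follows directly from the definition''), and your case analysis --- singletons and edges are covered by the defining properties of a \cliqueWitness, while any clique on at least $4$ vertices extends to a maximal clique that lies in $\CCC$ by Observation~\ref{obs:MaximalCliques2} --- is exactly the intended direct verification. No gaps.
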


We would like to point out that given a line graph $G$ one can find a \cliqueWitness\ for $G$ for example by using an algorithm of Degiorgi and Simon~\cite{DegiorgiSimon95} for recognition of line graphs in polynomial time. In the following lemma, we sketch the main procedure of their algorithm together with necessary modifications to actually output a \cliqueWitness\ instead of the underlying graph $H$ such that $G=L(H)$, for completeness.

\begin{lemma}
	\label{lem:constructingWitness}
	Given a graph $G$, there is an algorithm that in time $\bigO{|E(G)|+|V(G)|}$ decides whether $G$ is a line graph and if so, constructs a \cliqueWitness\ for $G$.
\end{lemma}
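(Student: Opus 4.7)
The plan is to invoke the linear-time line graph recognition algorithm of Degiorgi and Simon~\cite{DegiorgiSimon95} as a black box. On input $G$, their algorithm either reports that $G$ is not a line graph or produces a graph $H$ with $L(H)=G$, in time $\bigO{|V(G)|+|E(G)|}$. Given such an $H$, I will read off a \cliqueWitness\ directly via Whitney's correspondence: for each vertex $v \in V(H)$, define $C_v \subseteq V(G) = E(H)$ to be the set of edges of $H$ incident to $v$, and return $\CCC = \{C_v : v \in V(H),\ C_v \neq \emptyset\}$, keeping in particular any singletons that arise from degree-one vertices of $H$.

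To establish correctness, I will verify the five defining properties of a \cliqueWitness\ in turn. Each $C_v$ is a clique in $L(H)=G$ by definition of the line graph. For $v \neq w$, the intersection $C_v \cap C_w$ consists of edges of $H$ incident to both $v$ and $w$, and since $H$ is simple there is at most one such edge. Every vertex of $G$ corresponds to some edge $uw \in E(H)$, so it lies in exactly the two cliques $C_u$ and $C_w$. Every edge of $G$ corresponds to a pair of distinct edges of $H$ sharing a unique endpoint $v$, hence the pair is covered by the single clique $C_v$. This gives the five properties essentially by unwinding definitions.

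For the running time I will observe that $\sum_{v \in V(H)} |C_v| = 2|E(H)| = 2|V(G)|$, so constructing and outputting $\CCC$ from $H$ is linear in $|V(G)|$; combined with the Degiorgi-Simon recognition step this gives the claimed $\bigO{|V(G)|+|E(G)|}$ bound.

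The main obstacle is the Degiorgi-Simon algorithm itself, which I intend to treat largely as a black box and only sketch for completeness: it computes the Krausz partition of $E(G)$, namely the partition into maximal cliques in which every vertex of $G$ lies in at most two parts, and then realises $H$ by taking one vertex per part and joining two vertices by an edge whenever the corresponding parts share a vertex of $G$. The unique ambiguous case $K_3 = L(K_3) = L(K_{1,3})$ can be resolved arbitrarily, since both reconstructions already yield a valid \cliqueWitness; what the proof actually needs from~\cite{DegiorgiSimon95} is only the existence and linear-time computability of the Krausz partition, which is exactly the output format I re-package as $\CCC$.
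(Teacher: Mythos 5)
Your proof is correct, and it reaches the same destination as the paper's via a slightly different route. The paper opens up the incremental algorithm of Degiorgi and Simon~\cite{DegiorgiSimon95} and modifies it to maintain, alongside each intermediate root graph $H_i$, a \cliqueWitness\ $\CCC_i$ for $G_i=L(H_i)$ together with the bijection between $V(H_i)$ and $\CCC_i$; the witness is thus built up vertex by vertex as the recognition proceeds. You instead treat the recognition algorithm as a black box that returns the root graph $H$ with $L(H)=G$, and then recover the witness in a single linear post-processing pass by taking the star cliques $C_v=\{e\in E(H): v\in e\}$. Your verification of the five defining properties is sound (two distinct edges of a simple graph share at most one endpoint, which gives both $|C_v\cap C_w|\le 1$ and the unique covering clique for each edge of $G$), the handling of the $K_3$ versus $K_{1,3}$ ambiguity is correct since either root graph yields a valid witness, and the running-time accounting $\sum_v|C_v|=2|E(H)|=2|V(G)|$ is right. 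The black-box formulation buys modularity --- you need only the interface of~\cite{DegiorgiSimon95}, not its invariants --- at the cost of a mild mismatch in your closing paragraph: the Degiorgi--Simon algorithm does not literally ``compute the Krausz partition and then realise $H$'' but builds $H$ incrementally; this is a cosmetic inaccuracy in the sketch of the black box, not a gap in your argument, since all you actually use is that $H$ is produced in linear time.
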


\begin{proof}
	The algorithm by Degiorgi and Simon construct the input graph $G$ by adding vertices one at a time, at each step it chooses a vertex to add that is already adjacent to at least one previously-added vertex. That is it construct graphs $G_1$, $G_2$, $\ldots$, $G_n=G$ such that $G_i$ is a connected subgraph of $G$ on $i$ vertices. At each step it maintains a graph $H_i$ such that $G_i$ is a line graph of $H_i$. In here, we can actually keep a \cliqueWitness\ $\CCC_i$ for $G_i$ such that there is a bijection $\varphi_i$ between vertices of $H_i$ and clique in $\CCC_i$ such that $uv\in E(H_i)$ if and only if $|\varphi_i(u)\cap \varphi_i(v)|= 1$. 
	
	The algorithm heavily relies on the Whitney's isomorphism theorem that implies that if the underlying graph of $G_i$ has at least $4$ vertices, then the underlying graph $H_i$ is unique up to isomorphism.
	When adding a vertex $v$ to a graph $G_i$ for $i\le 4$, the algorithm simply brute-forces the possibilities for $H_i$ and $\CCC_i$.
	
	When adding a vertex $v$ to $G_i$ when $i>4$, let $S$ be the subgraph of $H_i$ formed by the edges that correspond to the neighbors of $v$ in $G_i$. 
	Check that $S$ has a vertex cover consisting of one vertex or two non-adjacent vertices, \emph{i.e.}, there are cliques $C_1$ and $C_2$ in $\CCC_i$ with $C_i\cap C_2 = \emptyset$ and $S\subseteq C_1\cap C_2$. If there are two vertices in the cover, add an edge (corresponding to $v$) that connects these two vertices in $H_i$ and add $v$ to both $C_1$ and $C_2$. If there is only one vertex $u$ in the cover, then add a new vertex to $H_i$, adjacent to this vertex, add $v$ to the clique $\varphi_i(u)$ in $\CCC_i$ and add a new clique $\{v\}$ to $\CCC_{i}$ to create $\CCC_{i+1}$.
\end{proof}

\subsection{Level Structure of Instances}

For the rest of the paper, let $G$ be the input graph and let $S$ be a set of at most $6k$ vertices such that for every $v\in S$ the graph $G-(S\setminus \{v\})$ is a line graph. 
We let $\mathcal{C} = \{C_1,\ldots, C_q\}$ be a \cliqueWitness\ for $G-S$. The goal of this subsection is to split the cliques in $\CCC$ to levels such that 1) each level contains only bounded number of cliques (that are not singletons) and 2) if we do not remove any edge at level $i$, then we do not need to remove any edge at level $j>i$. We will later show that we do not need to remove any edges in cliques in level $5$. The following lemma is useful to define/bound the number of cliques at the first level, i.e., cliques that interact with $S$.

\begin{lemma}\label{lem:neighborhoodOfS}
	For every vertex $v\in S$ there are at most two cliques $C_1,C_2 \in \CCC$ such that $v$ is adjacent to all vertices in $C_1\cup C_2$ and to at most $6$ vertices in $V(G)\setminus (S\cup C_1\cup C_2)$. 
\end{lemma}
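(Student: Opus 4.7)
The plan is to exploit that $G - (S \setminus \{v\})$ is a line graph, which by Lemma~\ref{lem:constructingWitness} (equivalently Theorem~\ref{thm:LGcharacterization}(2)) admits a \cliqueWitness\ $\mathcal{C}'$. Since $v$ lies in exactly two cliques of $\mathcal{C}'$, call them $D_1$ and $D_2$, and the pairwise-intersection bound forces $D_1 \cap D_2 = \{v\}$, every neighbor of $v$ in $V(G)\setminus S$ lies in $D_1' \cup D_2'$, where $D_i' := D_i \setminus \{v\} \subseteq V(G)\setminus S$. Note that $D_1' \cap D_2' = \emptyset$.

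Each $D_i'$ is a clique in $G-S$ (which has the \cliqueWitness\ $\mathcal{C}$). If $|D_i'| \le 3$, the contribution of $D_i'$ to $N(v)$ can be absorbed into the ``$6$-vertex slack'' of the lemma. If $|D_i'| \ge 4$, then Observation~\ref{obs:MaximalCliques3} applied to $\mathcal{C}$ provides a clique $C_i \in \mathcal{C}$ with $D_i' \subseteq C_i$. The crucial step will then be to upgrade this containment to the equality $C_i = D_i'$, thereby ensuring that the whole of $C_i$ lies in $N(v)$.

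The key argument for the reverse inclusion $C_i \subseteq D_i'$ is as follows. For any $u \in C_i$, since $C_i$ is a clique containing $D_i'$, the set $D_i' \cup \{u\}$ is a clique of size at least $5$ in $G - (S\setminus\{v\})$; by Observation~\ref{obs:MaximalCliques3} applied this time to $\mathcal{C}'$, it is contained in some $D \in \mathcal{C}'$. But then $D_i' \subseteq D \cap D_i$ with $|D_i'| \ge 4$, which by the pairwise-intersection-at-most-$1$ property of $\mathcal{C}'$ forces $D = D_i$ and hence $u \in D_i' \subseteq N(v)$.

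Combining everything, I would take $\{C_1, C_2\}$ to be the (at most two) cliques $D_i'$ with $|D_i'| \ge 4$; each such $C_i = D_i'$ is an element of $\mathcal{C}$ entirely contained in $N(v)$, and since $D_1'\cap D_2'=\emptyset$ the two chosen cliques are distinct. The residual neighborhood $N(v) \setminus (S \cup C_1 \cup C_2)$ consists of vertices lying in those $D_i'$ with $|D_i'| \le 3$, giving at most $3+3 = 6$ residual neighbors. The main obstacle I expect is precisely the ``$C_i \subseteq D_i'$'' step above: the size threshold $|D_i'|\ge 4$ is essential in order to rule out the ambiguity in clique witnesses that small cliques (in particular triangles, cf.\ Whitney's theorem) can introduce, and to let the pairwise-intersection bound pin down the ambient clique in $\mathcal{C}'$ uniquely.
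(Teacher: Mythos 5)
Your proof is correct and follows essentially the same route as the paper: both pass to a \cliqueWitness\ of $G-(S\setminus\{v\})$, take the two cliques containing $v$, identify the large ones (those with $|D_i'|\ge 4$, i.e.\ $|D_i|\ge 5$) with cliques of $\CCC$, and charge the small ones to the $3+3=6$ slack. Your explicit two-sided argument for $C_i=D_i'$ merely spells out what the paper delegates to Observation~\ref{obs:MaximalCliques2} (maximality of $D_i'$ in $G-S$), so there is no substantive difference.
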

\begin{proof}
	By the choice of the set $S$, it follows that $G-(S\setminus \{v\})$ is a line graph. Let $\CCC'$ be \cliqueWitness\ for $G-(S\setminus \{v\})$. By definition, there are at most two cliques $C_1'$ and $C_2'$ in $\CCC'$ that contains $v$ and all its neighbors. If $|C_i'|\ge 5$, for some $i\in \{1,2\}$, then by Observation~\ref{obs:MaximalCliques2}, $C_i'\setminus \{v\}$ is a clique in $\CCC$ and we can set $C_i$ to be $C_i'\setminus \{v\}$. Else $|C_i'\setminus v|\le 3$ and $C_i'$ contributes to at most $3$ neighbors of $v$ in $G-S$. 
\end{proof}

The following lemma shows that cliques of size at least $k+7$ can serve as kind of
separators that will never be changed by a solution of size at most $k$. Hence, we can
remove all cliques separated from $S$ by large cliques. Moreover, it allows us to define the $(i+1)$-st level by only considering the cliques of size at most $k+6$ at level $i$.

\begin{lemma}\label{lem:replaceLarge}
	Let $C\in \CCC$ such that $|C|\ge k+7$ and let $A\subset E(G)$ be an optimal solution for $G$. 
	Then $A\cap E(G[C])=\emptyset$. 
Moreover, the \cliqueWitness\ $\CCC'$ for $G - A$ contains a clique $C'$ such that $C'\setminus S=C$.
\end{lemma}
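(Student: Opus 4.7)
The plan is to prove the two assertions in turn. For the first, I would argue by contradiction, supposing some edge $uv \in A$ has both endpoints in $C$ and deriving an incompatibility with a \cliqueWitness\ $\CCC'$ of the line graph $G - A$ (guaranteed by Theorem~\ref{thm:LGcharacterization}). For the second, having ruled out deletions inside $G[C]$, I would extend $C$ to a maximal clique of $G - A$ and invoke Observation~\ref{obs:MaximalCliques2} together with the maximality of $C$ inside $G - S$.

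For the first step, I would let $D_1^u, D_2^u$ (resp.\ $D_1^v, D_2^v$) be the two cliques of $\CCC'$ containing $u$ (resp.\ $v$). Since $uv \notin E(G - A)$, the vertices $u$ and $v$ share no clique of $\CCC'$, so these four cliques are pairwise distinct; because two distinct cliques of a \cliqueWitness\ intersect in at most one vertex,
\[
\bigl|(D_1^u \cup D_2^u) \cap (D_1^v \cup D_2^v)\bigr| \;\leq\; 4.
\]
Any $c \in C \setminus \{u, v\}$ with $uc \notin A$ and $vc \notin A$ is adjacent to both $u$ and $v$ in $G - A$ and must therefore lie in that intersection. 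Since $|A \setminus \{uv\}| \leq k - 1$, at most $k - 1$ vertices of $C \setminus \{u, v\}$ can satisfy $uc \in A$ or $vc \in A$ (each such $c$ injects into $A \setminus \{uv\}$ via the edge $uc$ or $vc$), so the intersection contains at least $|C| - 2 - (k - 1) = |C| - k - 1 \geq 6$ vertices, contradicting the bound of $4$.

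For the second step, since no edge of $G[C]$ is deleted, $C$ remains a clique of $G - A$ and extends to a maximal clique $C' \supseteq C$ of $G - A$. As $|C'| \geq |C| \geq 4$, Observation~\ref{obs:MaximalCliques2} puts $C'$ in $\CCC'$. To verify $C' \setminus S = C$, I would use that $C \subseteq V(G) \setminus S$ since $\CCC$ is a \cliqueWitness\ for $G - S$, so together with $C \subseteq C'$ it suffices to rule out $c \in (C' \setminus C) \setminus S$. Any such $c$ would be adjacent in $G - S$ to every vertex of $C$, contradicting the maximality of $C$ in $G - S$ (by Observation~\ref{obs:MaximalCliques}, since $|C| \geq 4$ forces $C$ to be maximal there). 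The main obstacle is the counting step in the first part: the constant $7$ in $|C| \geq k + 7$ is calibrated precisely so that the lower bound $|C| - k - 1$ on surviving common neighbours in $C$ strictly exceeds the cap of $4$ arising from intersections of four pairwise distinct cliques in a \cliqueWitness.
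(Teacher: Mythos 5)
Your proof is correct, and its first half takes a genuinely different route from the paper's. For the claim $A\cap E(G[C])=\emptyset$, the paper works with characterization (4) of Theorem~\ref{thm:LGcharacterization}: it greedily finds three vertices $x,y,z\in C$ that remain common neighbours of $u$ and $v$ in $G-A$ (each selection costs at most $k-1$ excluded vertices in total, since every excluded vertex consumes an edge of $A\setminus\{uv\}$), so that $\{u,v,x,y,z\}$ induces $K_5$ minus an edge, a forbidden induced subgraph. You instead work with characterization (2): every surviving common neighbour of $u$ and $v$ must lie in the intersection of the two cliques of $\CCC'$ covering $u$'s edges with the two covering $v$'s, and since $uv\notin E(G-A)$ these four cliques are pairwise distinct and pairwise intersect in at most one vertex, capping that intersection at $4$ while your count guarantees at least $|C|-k-1\ge 6$ survivors. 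Both arguments are sound and both fit comfortably under the $|C|\ge k+7$ hypothesis (the paper's needs roughly $|C|\ge k+5$, yours $|C|\ge k+6$; the constant $7$ is dictated by other lemmas). The paper's version is slightly more self-contained for this half, as it never needs to invoke the witness $\CCC'$ of $G-A$; yours has the advantage of using the same clique-partition machinery as the second half and as the rest of the paper. For the \emph{moreover} part, your argument is essentially the paper's: extend $C$ to a maximal clique $C'$ of $G-A$, place $C'$ in $\CCC'$ via Observation~\ref{obs:MaximalCliques2}, and rule out vertices of $(C'\setminus C)\setminus S$ by the maximality of $C$ in $G-S$ — you cite Observation~\ref{obs:MaximalCliques} for that maximality where the paper re-derives it inline from the fact that each vertex outside $C$ has its edges into $C$ covered by two cliques each meeting $C$ in at most one vertex.
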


\begin{proof}
	Let $\{u,v\}\in A$ such that $\{u,v\}\subseteq C$. Clearly there are at most $k-1$ vertices $w$ in $C$ such that either $\{u,w\}\in A$ or $\{w,v\}\in A$. Let $x\in C$ be such that $xv, xu$ are edges in $G- A$. Similarly, there are at most $k-1$ non-edges to $u,v,x$ in $G-A$, so let $y\in C$ be a vertex such that $yu, yv, yx$ are edges in $G- A$. Repeating the same argument once again, there is $z\in C$ such that $zu, zv, zx, zy$ are edges in $G- A$. However, the subgraph of $G-A$ induced on $u,v,x,y,z$ is $K_5$ minus an edge, which is one of the forbidden induced subgraphs in the characterization of line graphs. 
	
	The moreover part follows from the following argument. Since $|C|\ge k+7\ge 4$ and, by Observation~\ref{obs:MaximalCliques2} it follows that the \cliqueWitness\ $\CCC'$ contains a maximal clique $C'\supseteq C$. 
	It remains to show that no vertex in $V(G)\setminus (S\cup C)$ is in $C'$. Every vertex in $V(G)\setminus S$ is in two cliques $C_1$, $C_2$ in $\CCC$ that cover all its incident edges in $G-S$. If none of these two cliques is $C$, then $C$ intersect each of these two cliques in at most $1$ vertex. It follows that, because $|C|\ge 3$, there is no vertex in $V(G)\setminus (S\cup C)$ adjacent to all vertices~of~$C$.     
\end{proof}

Let us now partition the cliques in $\CCC$ into two parts $\CCC_{< k+7}$ and $\CCC_{\ge k+7}$ such that $\CCC_{< k+7}$ contains precisely all the cliques in $\CCC$ with less than $k+7$ vertices and $\CCC_{\ge k+7}$ contains the remaining cliques. We will refer to the cliques in $\CCC_{< k+7}$ as \emph{small} cliques and the cliques in $\CCC_{\ge k+7}$ as \emph{large} cliques. Intuitively, if we are forced to delete some edge in $G-S$, then this change had to be propagated from $S$ only by changes in small cliques.

We are now ready to define the level structure on the cliques in $\CCC$. We divide the cliques in $\CCC$ into levels $\LLL_1,\LLL_2, \ldots, \LLL_p$, for some $p\in \mathbb{N}$, that intuitively reflects on how far from $S$ the clique $C\in \CCC$ is if we consider a shortest path using only small cliques.
We will define the levels recursively as follows. By Lemma~\ref{lem:neighborhoodOfS} for every vertex $v \in S$ there exists at most two cliques $C_1,C_2\in \CCC$ such that $v$ is adjacent to all vertices in $C_1\cup C_2$ and to at most $6$ vertices in $V(G)\setminus (S\cup C_1\cup C_2)$. 
Now, for a vertex $v\in S$, let $\NNN^v$ denote the set of cliques that contains $C_1, C_2$ and all the cliques in $\CCC$ that contain at least one of the neighbors of $v$ in $V(G)\setminus (S\cup C_1\cup C_2)$.
We let $\LLL_1$ be precisely the set $\bigcup_{v\in S}\NNN^v$. Note that vertices in $C_1\cup C_2$ can each appear in one other clique that is not in $\NNN^v$ and in particular there are cliques that contain a vertex adjacent to a vertex in $S$ and are not in $\LLL_1$.
For $i>1$, we then let $\LLL_i$ be the set of cliques $C$ in $\CCC\setminus (\bigcup_{j\in \{1..i-1\}}\LLL_j)$ such that there is a small clique $C'$ in the previous level (\emph{i.e.}, $C'\in \LLL_{i-1}\cap \CCC_{< k+7}$) such that $C\cap C'$ is not empty. 

\begin{observation}\label{obs:withNeighborsInS}
	Let $C\in \CCC$ and $w$ a vertex in $C$. If $w$ has a neighbor in $S$, then either $C\in \LLL_1\cup \LLL_2$ or $w$ is in a large clique.
\end{observation}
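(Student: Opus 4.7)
The plan is to do a straightforward case analysis on the relationship between $w$ and the two ``canonical'' cliques produced by \cref{lem:neighborhoodOfS} for the neighbor of $w$ in $S$.

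First, I would fix a neighbor $v\in S$ of $w$ and apply \cref{lem:neighborhoodOfS} to obtain the two cliques $C_1,C_2\in \CCC$ such that $v$ is adjacent to all vertices in $C_1\cup C_2$ and to at most six vertices in $V(G)\setminus(S\cup C_1\cup C_2)$. By construction of $\LLL_1$, the set $\NNN^v$ contains $C_1$ and $C_2$ together with every clique of $\CCC$ meeting the (at most six) remaining neighbors of $v$. The argument then splits according to whether $w\in C_1\cup C_2$ or not.

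The easy case is $w\notin C_1\cup C_2$: then $w$ is one of the $\le 6$ ``stray'' neighbors of $v$, so any clique of $\CCC$ containing $w$ belongs to $\NNN^v\subseteq \LLL_1$, and in particular $C\in \LLL_1$. In the remaining case, say $w\in C_1$ (the argument for $C_2$ being symmetric). Because $\CCC$ is a \cliqueWitness, $w$ lies in exactly two cliques of $\CCC$; one of them is $C_1$, and $C$ is either $C_1$ itself or this other clique, call it $C^\star$. If $C=C_1$, then $C\in\LLL_1$ and we are done. Otherwise $C=C^\star$ intersects $C_1$ in $w$, so $C\cap C_1\neq\emptyset$ with $C_1\in\LLL_1$. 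If $C_1$ is large, i.e.\ $C_1\in\CCC_{\ge k+7}$, then $w\in C_1$ shows that $w$ lies in a large clique, which is the second alternative of the statement. If instead $C_1$ is small, then $C_1\in\LLL_1\cap\CCC_{<k+7}$ and the definition of $\LLL_2$ forces $C\in\LLL_1\cup\LLL_2$.

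There is no genuine obstacle here: the statement is essentially a bookkeeping consequence of how $\LLL_1$ and $\LLL_2$ were defined, together with the fact that every vertex outside $S$ is in exactly two cliques of $\CCC$. The only subtle point I want to be careful about is remembering that the recursion $\LLL_i\to\LLL_{i+1}$ only propagates through \emph{small} cliques, which is precisely why the large-clique alternative has to appear in the conclusion.
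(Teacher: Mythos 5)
Your proposal is correct and follows essentially the same route as the paper: the paper's proof also picks a neighbor $v\in S$ of $w$, notes that $\NNN^v\subseteq\LLL_1$ contains a clique $C'$ with $w\in C'$, and concludes that either $C'$ is large (so $w$ lies in a large clique) or $C$ meets a small clique of $\LLL_1$ and hence lies in $\LLL_1\cup\LLL_2$. Your version merely makes the case split between $w\in C_1\cup C_2$ and $w$ being a ``stray'' neighbor explicit, which the paper compresses into a single sentence.
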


\begin{proof}
	Let $v\in S$ be a neighbor of $w$. Then $\NNN^v\subseteq \LLL_1$ contains a clique $C'$ with $w\in C'$. Clearly $C'$ intersects $C$ in $w$. Hence either $C'$ is a large clique or by the definition of $\LLL_2$ the clique $C$ is in $\LLL_1\cup \LLL_2$. 
\end{proof}

Let $p\in \mathbb{N}$ be such that $\LLL_p\neq \emptyset$ and $\LLL_{p+1}=\emptyset$. While the following Reduction Rule is not completely necessary and would be subsumed by Reduction Rule~\ref{rrule:removeFarVertices}, we include it to showcase some of the ideas needed for the proof in a simplified setting. 

\begin{rrule}\label{rrule:largeSeparatorRemoval}
	Remove all vertices in $V(G)\setminus S$ that are not in a clique in $\bigcup_{i\in [p]}\LLL_i$. 
\end{rrule}

\begin{proof}[Proof of safeness]
	Let $H$ be the resulting graph and let $\CCC_H$ be a set of cliques of $H$ obtained from $\CCC$, by taking all cliques in $\bigcup_{i\in [p]}\LLL_i$ and for every clique in $C\in (\CCC\setminus \bigcup_{i\in [q]}\LLL_i)$, $\CCC_H$ contains $C\cap V(H)$, if it is nonempty. Since $H$ is an induced subgraph of $G$ and line graphs can be characterized by a set for forbidden induced subgraphs, it follows that for every $A\in E(G)$, if $G-A$ is a line graph, then $H-A$ is a line graph. It remains to show that if there is a set of edges $A\in E(H)$ such that $|A|\le k$ and $H-A$ is a line graph, then $G-A$ is also a line graph. Let $A$ be such a set of edges of minimum size and let $\CCC_A$ be a \cliqueWitness\ for $H-A$. It suffices to show that for every clique in $C\in (\CCC_H\setminus \bigcup_{i\in [p]}\LLL_i)$, it holds that $C\in \CCC_A$. If this is the case, we get a \cliqueWitness\ for $G-A$ by replacing the cliques of $\CCC_H\setminus \bigcup_{i\in [p]}\LLL_i$ in $\CCC_A$ by $\CCC\setminus \bigcup_{i\in [p]}\LLL_i$.
	
	Now, $C\in (\CCC_H\setminus \bigcup_{i\in [p]}\LLL_i)$ means that all cliques intersecting $C$ are large. Moreover, because all vertices in $H$ are in some clique on some level, by Lemma~\ref{lem:replaceLarge}, for each clique $C_1 \in \CCC_H$ that intersect $C$ there is a clique in $C'_1 \in \CCC_A$ that is the union of $C_1$ and some vertices in $S$. Hence, all vertices in $C$ are already in at least one clique in $\CCC_A\setminus C$ and all the edges incident to exactly one vertex in $C$ are already covered by these cliques. And hence every clique that contains a vertex in $C$ and intersects every other clique in $\CCC_A$ in at most one vertex has to be a subset of $C$. Moreover, the cliques in $\CCC_A$ that are subsets of $C$ have to be vertex disjoint, since every vertex is in at most $2$ cliques in $\CCC_A$. Hence,  if $C$ is not in $\CCC_A$, then some of the edges in $C$ have to be in $A$, but replacing all the subsets of $C$ in $\CCC_A$ by $C$ gives a \cliqueWitness\ for $H-A'$ for some $A'\subsetneq A$ which contradicts the fact that $A$ is of minimum size.
\end{proof}

We will also say that $C\in \CCC$ is at \distance\ $d$ from $S$, denoted by $\distC(C)$, if $C$ is in $\LLL_d$. We note that $\CCC$ still contains some cliques that are not in any of $\LLL_i$'s. We will let $\distC(C)=\infty$ for such a clique $C$. We can now upper bound the number of cliques at \distance\ $d$ from $S$.
 
 \begin{lemma}\label{lem:cliquesAtDistanceD}
 	There are at most $14|S|(k+6)^{d-1}$ cliques in $\CCC$ at level $d$, \emph{i.e.}, in $\LLL_d$.
 \end{lemma}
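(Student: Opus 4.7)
The plan is a straightforward induction on $d$, exploiting the recursive definition of the levels together with the size bound on small cliques and the fact that each vertex of $V(G)\setminus S$ lies in exactly two cliques of $\CCC$.

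For the base case $d=1$, I would bound $|\NNN^v|$ for each $v\in S$ and then sum over $v$. By Lemma~\ref{lem:neighborhoodOfS}, the set $\NNN^v$ consists of the two cliques $C_1,C_2$ covering all of $v$'s neighbors in $V(G)\setminus S$ except at most $6$ of them, together with the cliques of $\CCC$ containing at least one of those remaining $\le 6$ neighbors. Each such neighbor lies in $V(G)\setminus S$, hence by the definition of a \cliqueWitness\ it belongs to exactly two cliques of $\CCC$. Therefore $|\NNN^v|\le 2+2\cdot 6 = 14$, and summing over the at most $|S|$ vertices gives $|\LLL_1|\le 14|S| = 14|S|(k+6)^{0}$, as required.

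For the inductive step, assume the bound holds for $\LLL_{d-1}$. By definition of $\LLL_d$, every clique in $\LLL_d$ shares a vertex with some small clique $C'\in \LLL_{d-1}\cap \CCC_{<k+7}$. Each such small clique satisfies $|C'|\le k+6$, and every vertex $w\in C'$ lies in exactly two cliques of $\CCC$, exactly one of which is $C'$ itself. Hence at most $|C'|\le k+6$ cliques of $\CCC$ intersect $C'$ other than $C'$. Consequently
\[
|\LLL_d|\;\le\;(k+6)\cdot |\LLL_{d-1}\cap \CCC_{<k+7}|\;\le\;(k+6)\cdot |\LLL_{d-1}|\;\le\;(k+6)\cdot 14|S|(k+6)^{d-2}\;=\;14|S|(k+6)^{d-1},
\]
which completes the induction.

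There is no real obstacle here: once one observes that each vertex of a clique of $\CCC$ belongs to at most one other clique of $\CCC$, the fact that we only branch through \emph{small} cliques (of size $\le k+6$) in the definition of the levels is exactly what makes the factor $(k+6)$ per level appear. The only mild subtlety is making sure the base count of $14$ per vertex of $S$ is derived correctly from Lemma~\ref{lem:neighborhoodOfS}, namely that each of the at most $6$ ``stray'' neighbors contributes up to two cliques rather than one.
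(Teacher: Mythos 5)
Your proof is correct and follows essentially the same argument as the paper: the base case counts $|\NNN^v|\le 2+2\cdot 6=14$ per vertex of $S$, and the inductive step charges each clique of $\LLL_d$ to a vertex of a small clique at level $d-1$, yielding the factor $k+6$. No gaps.
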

\begin{proof}
	By the definition of $\LLL_1=\bigcup_{v\in S}\NNN^v$, where $\NNN^v$ denote the set of cliques that
	 contains $C_1, C_2$ and all the cliques in $\CCC$ that contain at least one of the neighbors of $v$ in  
	 $V(G)\setminus (S\cup C_1\cup C_2)$. 
	 By Lemma~\ref{lem:neighborhoodOfS} for every vertex $v \in S$ there exists at most two cliques $C_1,C_2\in \CCC$ such that $v$ is adjacent to all vertices in $C_1\cup C_2$ and to at most $6$ vertices in $V(G)\setminus (S\cup C_1\cup C_2)$. Since every vertex appears in two cliques of $\CCC$, it follows that $|\NNN^v|\le 14$ and consecutively $\LLL_1$ contains at most $14|S|$ cliques.
	Now by the definition of $\LLL_d$ we know that for any $d\ge 2$ a clique is at level $d$ if and only if it shares a vertex with a small clique at level $d-1$. Since no three cliques in $\CCC$ can share a vertex the number of cliques at level $d$ is at most the number of vertices in the small cliques at level $d-1$ and the lemma follows by a simple induction~on~$d$.
\end{proof}

  The remainder of the algorithm consists of two steps. First, in Section~\ref{sec:boundingDistance}, we show that we can remove 
  all edges from cliques that are at \distance\ at least $5$ from $S$. Afterwards, 
  due to Lemma~\ref{lem:cliquesAtDistanceD}, we are left with only $\bigO{k^4}$ non-singleton cliques in $\CCC$. To finish the algorithm in Section~\ref{sec:boundingSize}, for each clique $C\in \CCC$
  that is not a singleton, we mark an arbitrary subset of $k+7$ vertices in $C$ and remove all unmarked vertices from $G$. It is then rather straightforward consequence of Lemma~\ref{lem:replaceLarge} 
  that this rule is safe and we get an equivalent instance with $\bigO{k^5}$ vertices.  

\section{Bounding the Distance from $S$}\label{sec:boundingDistance}

The purpose of this section is to show that it is only necessary to keep the cliques in $\CCC$ that are at \distance\ at most $4$ from $S$ (and adding a singleton for vertices covered by exactly one clique at \distance\ at most $4$). To do so, we need to show that there is always a solution that does not change the cliques at \distance\ $5$ at all. For this purpose, we first need to understand the interaction of cliques at \distance\ $4$ from $S$ with the solution. The first step will be to show that there is an optimal solution $A$ with \cliqueWitness\ $\CCC_A$ such that all cliques in $\CCC_A$ that share an edge with a clique in $\CCC$ at \distance\ at least $4$ from $S$ are actually subcliques of a clique in $\CCC$ (when restricted to $G-S$). It is a simple consequence of Lemma~\ref{lem:replaceLarge} that this is true for any clique that intersect a large clique in an edge. Hence, we can only care about cliques in $\CCC_A$ that intersect a small clique $C$ in an edge. By Observation~\ref{obs:withNeighborsInS}, no vertex in $C$ has a neighbor in $S$. It then follows by Observation~\ref{obs:MaximalCliques3} that any clique in $\CCC_A$ that intersects $C$ in an edge and is not a subclique of a clique in $\CCC$ is indeed a triangle. This leads us to the following definition.

\begin{definition}[bad triangle]
Let $A \subseteq E(G)$ be such that $G-A$ is a line graph and let $\mathcal{C}_{A}$ be a \cliqueWitness\ of $G-A$. A triangle $xyz \in \mathcal{C}_{A}$ is said to be \emph{bad} if it is not a sub-clique of a clique in $\CCC$, and one of the edges of the triangle, say $xy$, is an edge contained in a clique of $\mathcal{L}$-distance at least $4$ from $S$. 
\end{definition}

\begin{lemma}\label{lem:noBadTriangles}
	There exists an optimal solution without any bad triangle. 
\end{lemma}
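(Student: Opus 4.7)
The plan is to pick an optimal solution $A$ together with a clique witness $\CCC_A$ of $G-A$ that minimizes the number of bad triangles among all such pairs, and derive a contradiction from the existence of any bad triangle by exhibiting a pair $(A', \CCC_A')$ with $|A'| \le |A|$ and strictly fewer bad triangles. So let $T = \{x, y, z\} \in \CCC_A$ be a bad triangle with $xy$ contained in a clique $C \in \CCC$ of $\LLL$-distance at least $4$. A first observation is that $C$ must be small: if $|C| \ge k+7$, Lemma~\ref{lem:replaceLarge} gives a clique $C' \in \CCC_A$ with $C' \setminus S = C$ and $|C'| \ge 4$, and since the edge $xy$ lies in a unique clique of the witness it cannot lie in both $C'$ and $T$.

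Next I would locate $z$ in $\CCC$. Since $T$ is a clique of $G-A$, the edges $xz, yz$ exist in $G$. In the main case $z \in V(G) \setminus S$: because $T$ is not a sub-clique of any clique of $\CCC$, $z \notin C$; writing $x$'s two $\CCC$-cliques as $C, C_x$ and $y$'s as $C, C_y$, the edges $xz, yz$ force $z \in C_x \cap C_y$, so $C_x \ne C_y$ and $z$'s two cliques in $\CCC$ are precisely $C_x$ and $C_y$. The case $z \in S$ can be handled separately using Observation~\ref{obs:withNeighborsInS} combined with $\distC(C) \ge 4$ to force both $x$ and $y$ into large cliques of $\CCC$; then Lemma~\ref{lem:replaceLarge} together with the uniqueness of edge-membership in $\CCC_A$ pins down a rigid structure around $T$ which can be disposed of similarly.

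The rearrangement then replaces $T$ by a configuration mirroring the $\CCC$-structure around $\{x, y, z\}$: edge $xy$ is absorbed into a clique aligned with $C$, edge $xz$ into one aligned with $C_x$, and edge $yz$ into one aligned with $C_y$. The key structural claim driving this is that the ``other'' $\CCC_A$-cliques $D_x^*, D_y^*, D_z^*$ of $x, y, z$ (besides $T$) are each contained in one of the corresponding $\CCC$-cliques of their vertex: a clique in $G-A$ containing $z$ but avoiding $\{x, y\}$ can only use $G$-neighbors of $z$ drawn from $(C_x \cup C_y) \setminus \{x, y, z\}$, and no vertex of $C_x \setminus \{z\}$ is adjacent in $G-S$ to any vertex of $C_y \setminus \{z\}$, so $D_z^* \setminus \{z\}$ lies wholly in $C_x$ or wholly in $C_y$, and symmetrically for $D_x^*, D_y^*$. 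A short case distinction on which of $C, C_x, C_y$ each of $D_x^*, D_y^*, D_z^*$ aligns with then produces, in each case, an explicit $\CCC_A'$ (using the same deletion set $A$) with one fewer bad triangle.

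The main obstacle I anticipate is the verification that this rearrangement preserves all clique-witness conditions: every edge must lie in exactly one clique, every vertex in exactly two cliques, and no new bad triangle may be introduced elsewhere in the graph. This hinges on the fact that $C$ being small and at $\LLL$-distance at least $4$ isolates the triangle $T$ from $S$ (via Observation~\ref{obs:withNeighborsInS}) and from large cliques of $\CCC$ (via Lemma~\ref{lem:replaceLarge}), confining the surgery to the $G-S$-neighborhood of $C \cup C_x \cup C_y$ and thereby ruling out collateral damage elsewhere in $\CCC_A$.
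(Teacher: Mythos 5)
Your opening moves match the paper's: the extremal choice of $(A,\CCC_A)$, the observation that $C$ must be small (via Lemma~\ref{lem:replaceLarge} and uniqueness of the clique covering $xy$), and the identification of the three cliques of $\CCC$ carrying the edges $xy$, $xz$, $yz$ (your $C, C_x, C_y$ are the paper's $C_1, C_3, C_2$). But the core of your argument rests on a false structural claim. You assert that no vertex of $C_x\setminus\{z\}$ is adjacent in $G-S$ to any vertex of $C_y\setminus\{z\}$, and conclude that $D_z^*$ lies wholly inside $C_x$ or wholly inside $C_y$. This is not true: a vertex $a\in C_x\setminus\{z\}$ and a vertex $b\in C_y\setminus\{z\}$ can be adjacent via a \emph{third} clique of $\CCC$ containing both (in the preimage picture, $C_x$ and $C_y$ are the edge-stars of two adjacent vertices $u,v$, and $a,b$ are edges at $u$ and at $v$ sharing their other endpoint, i.e., a triangle through $uv$). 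Then $D_z^*$ can be a triangle $\{z,a,b\}$ straddling $C_x$ and $C_y$ — and since $C_x,C_y$ are only guaranteed to be at \distance{} $3$, such a straddling triangle need not itself be bad, so minimality does not exclude it. The same straddling phenomenon occurs for $D_x^*$ and $D_y^*$. Your case distinction over "which of $C,C_x,C_y$ each of $D_x^*,D_y^*,D_z^*$ aligns with" is therefore incomplete; this is exactly why the paper works with the intersections $X\cap C_1$, $X\cap C_3$, etc., rather than with a containment assumption, and why its replacement cliques $C_i'$ end up intersecting further triangles of $\CCC_A$ in single edges (the $C^i_j$).

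The second, and larger, gap is your claim that the surgery can be performed "using the same deletion set $A$". The clique you must create inside $C$ to cover $xy$ after destroying the triangle is (at least) $(D_x^*\cap C)\cup(D_y^*\cap C)$, and nothing prevents $A$ from containing edges between $(D_x^*\cap C)\setminus\{x\}$ and $(D_y^*\cap C)\setminus\{y\}$; then this union is not a clique of $G-A$ and cannot appear in a \cliqueWitness{} for $G-A$. The paper's proof must therefore \emph{modify} $A$: it removes $A\cap C_i'$ from the solution, pays for this by adding one edge for each pendant triangle $C^i_j$ meeting $C_i'$ in an edge, and the counting argument $|A\cap C_i'|\ge s_i$ is precisely what guarantees $|A'|\le|A|$. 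This exchange argument is the technical heart of the lemma and is entirely absent from your proposal. (Minor point: you are right that the case $z\in S$ needs attention — the paper silently assumes $z\notin S$ — but your sketch for it is too vague to evaluate; note that $z\in S$ forces $C_x$ and $C_y$ to be large, so both other $\CCC_A$-cliques of $x$ and $y$ are the large cliques supplied by Lemma~\ref{lem:replaceLarge}, which is the rigidity one would exploit.)
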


\begin{proof}
	Let $A$ be an optimal solution and $\mathcal{C}_{A}$ the clique partition witness of $G-A$.
	Suppose $xyz$ is a bad triangle and let $C_1, C_2$ and $C_3$ be the elements of $\CCC$ containing the edges $xy, yz$ and $zx$ respectively. See also Figure~\ref{fig:noBadTriangle} for an illustration.
	Since $xyz$ is a bad triangle, no clique in $\CCC_A$ is a superset of $C_i$, $i\in \{1,2,3\}$ and it is a simple consequence of Lemma~\ref{lem:replaceLarge} that $C_i$ is a small clique. By definition of bad triangle, at least one of $C_1$, $C_2$, and $C_3$ is at \distance\ at least $4$ from $S$ and hence all of these cliques are at \distance\ at least $3$ from $S$. Let $X$ (resp. $Y$, $Z$) denote the other clique of $\mathcal{C}_{A}$ containing $x$ (reps. $y$, $z$). Let us define $X_1 = X \cap C_1, X_3 = X \cap C_3, Y_1 = Y \cap C_1, Y_2 = Y \cap C_2, Z_3 = Z \cap C_3$ and $Z_2 = Z \cap C_2$.
	
	Let $C'_1 = X_1 \cup Y_1$, $C'_2 = Y_2 \cup Z_2$ and $C'_3 = Z_3 \cup X_3$. Note that $C'_i$ is a sub-clique of $C_i$ for $i \in [3]$. Now for every $i \in [3]$ we will update $C'_i$ as follows. As long as there exists an edge $e$ in $C'_i$ such that $e$ belongs to $K_i \in \mathcal{C}_{A}$,  $K_i$ is a sub-clique of $C_i$ and $K_i  \not \subseteq C'_i$, we set $C'_i := C'_i \cup K_i$ (see also Figure~\ref{subfig:replacementClique}). When this process stops, $C'_i$ corresponds to the union of a set of elements of $\mathcal{C}_{A}$ : $K^i_1,\dots,  K^i_{l_i}$ which are sub-cliques of $C_i$, and $C'_i$. Moreover, for any edge $e$ of $C'_i$ which is strictly contained in another clique of $\mathcal{C}_{A}$ (meaning this clique is not $e$), then this clique has to be a triangle by Observation \ref{obs:MaximalCliques3}, as the clique of $\CCC$ containing $e$ is $C_i$. Let $e^i_1, \dots, e^i_{s_i}$ denote the set of such edges and let $C^i_1, \dots, C^i_{s_i}$ be the triangles of $\mathcal{C}_{A}$ containing these edges. Note that $|A \cap C'_1| \geq s_1$, as for any edge $e^1_j$, either $x$ or $y$ has to be non adjacent to each extremity in $G-A$ or the edge would be in two cliques of $\mathcal{C}_{A}$  (the same statement is also correct for $|A \cap C'_2|$ and $|A \cap C'_3|$ ). Let $A'$ be the set obtained from $A$ by 
	\begin{itemize}
		\item Removing all the edges of $A \cap C'_1$, $A \cap C'_2$ and $A \cap C'_3$.
		\item Adding one of the two edges of $C^i_j$ different from $e^i_j$ for every $i \in [3]$ and $j \in [s_i]$ (see Figure~\ref{subfig:pendantTriangles} illustrating the replacement of $C_i^j$ in $\CCC_A$ by its proper subclique in $\CCC_{A'}$ implied by this addition of an edge in $A'$.).
	\end{itemize}
	
	\begin{claim}
		$A'$ is a set of edges not larger than $A$ and such that $G-A'$ is a line graph with fewer bad triangles than $G-A$. 
	\end{claim}
	
	\begin{proof}
		The fact that $|A'|\leq |A|$ follows from the fact that  $|A \cap C'_i| \geq s_i$ for all $i \in [3]$. To see that $G-A'$ is a line graph, let us show that $\mathcal{C}_{A'}$ defined as follows is a clique partition witness for $G-A'$. Let $\mathcal{C}_{A'}$ be the set defined from $\mathcal{C}_A$ by
		\begin{itemize}
			\item Removing $C_A$, $X$, $Y$, $Z$, every $C^i_j$ for $i \in [3]$, $j \in [s_i]$, every $K^i_j$ for every $i \in [3]$ and $j \in [l_i]$ and every edge which are contained in one of the $C'_i$. 
			\item Adding $C'_i$ for $i \in [3]$ and for every $i \in [3]$ and $j \in [s_i]$ the edge of $C^i_j$ which has not been removed from $A$, as well as singletons for vertices belonging to only one clique.    
		\end{itemize}
		
		First it is clear that any set added to $\mathcal{C}_A'$ is a clique as $A'$ does not contain any edge in $A \cap C'_1$, $A \cap C'_2$ and $A \cap C'_3$ and these sets are cliques of $G$.
		
		Now take $B$ and $C$ two cliques of $\mathcal{C}_A'$. If $B$ and $C$ belong to $\mathcal{C}_A$, then clearly their intersection has size at most $1$. If one belongs to $\mathcal{C}_A$ and the other is the remaining edge of $C^i_j$ for $i \in [3]$ and $j \in [s_i]$, then it is also clear as it is true for $C^i_j$. For $i,j \in [3]^2$,  $C'_i$ and $C'_j$ also intersect on one vertex, because $C_i$ and $C_j$ do and moreover, the cliques of $\mathcal{C}_A$ intersecting $C'_i$ on two vertices are exactly the $C^i_j$, so if $B = C'_i$ and $A \in \mathcal{C}_A$, the intersection has also size at most $1$, and we covered all the cases for $|C \cap B|$. 
		
		Now for every vertex $x \in V(G)$, if $x$ does not belong to $C'_1, C'_2$ and $C'_3$, then it belongs to the same cliques as in $\mathcal{C}_A$ (where the $C^i_j$ have been reduced to an edge and a singleton). For the vertices of $C'_1, C'_2$ and $C'_3$ different from $x,y,z$, we replaced one sub-clique of $C_i$ by another. Finally $x$ belongs to $C'_1$ and $C'_3$, $y$ to $C'_1$ and $C'_2$ and $z$ to $C'_2$ and $C'_3$. 
		
		Suppose $uv$ is an edge of $E(G-A')$. If $uv$ belongs to one of the $C'_i$, then by definition of the $C^i_j$ and because we removed all these triangles, $uv$ only belongs to one clique. For the other edges of $E(G-A')$, the fact that $uv$ belongs to exactly one clique of $\mathcal{C}_A'$ follows from the fact that $A'$ differs on those edges from $A$ only because we added some edges of the $C^i_j$, and $\mathcal{C}_A$ differs on these vertices only because we changed $C^i_j$ into the remaining edge outside~$C'_i$.
		
		Overall $\mathcal{C}_{A'}$ is indeed a clique partition for $G-A'$. Moreover, to obtain it, we removed at least one bad triangle from $\mathcal{C}_A$ ($C_A$) without adding one. This ends the proof of the~claim.
	\end{proof}
	Finally, we can repeat the process until $\mathcal{C}_{A'}$ is without any bad triangles, which ends the proof of the lemma. 
\end{proof}

\begin{figure}[t]
	\begin{subfigure}[t]{0.30\textwidth}
		\centering
		\includegraphics[width=.85\textwidth,page=6]{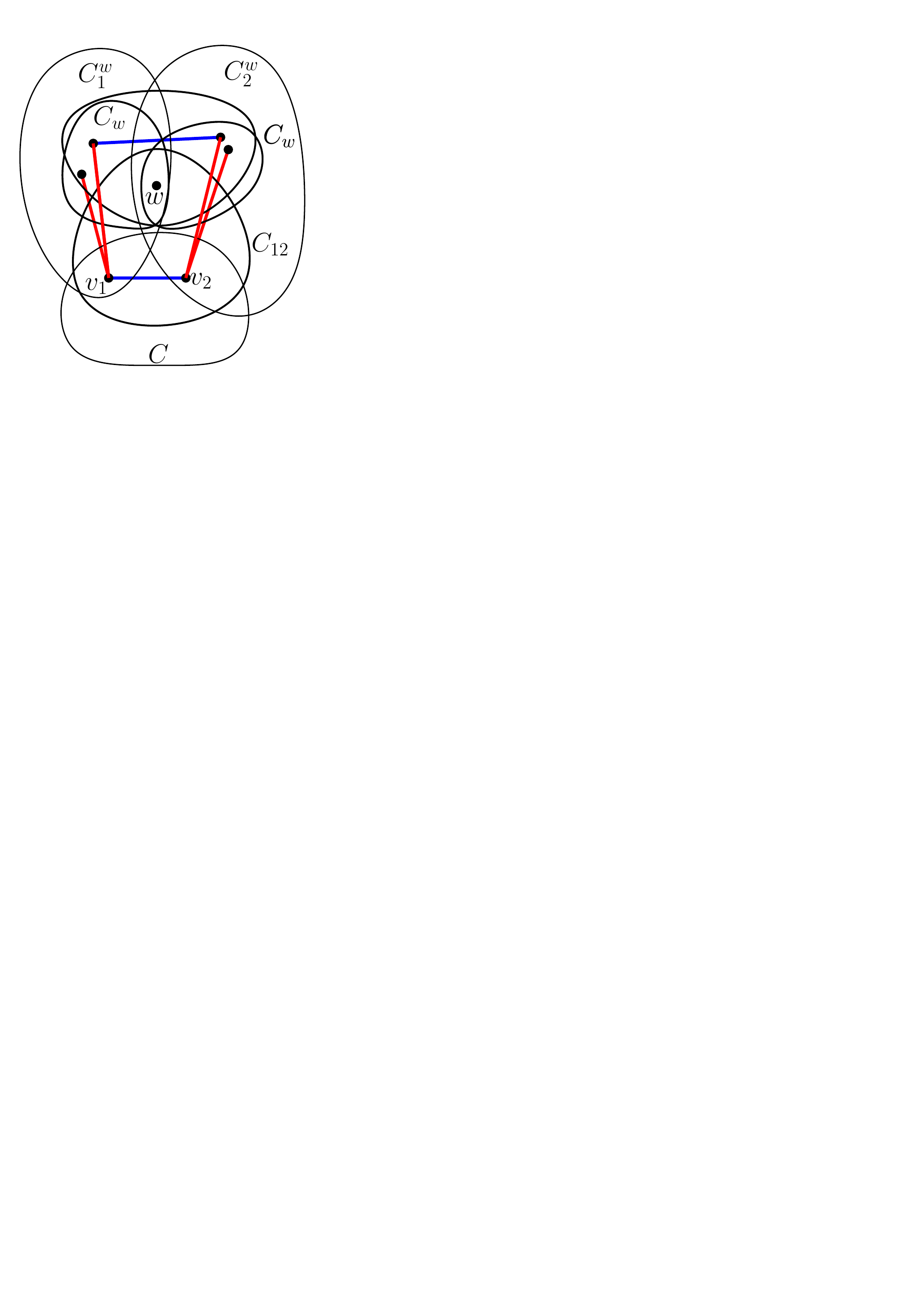}
		\caption{A bad triangle $xyz$ in $\CCC_A$. $C_1$, $C_2$, $C_3$ are three cliques in $\CCC$ containing $xy$, $yz$, and $xz$ respectively. $X,Y,Z\in \CCC_A$ are the cliques containing $x,y,z$ other than $xyz$. }\label{subfig:originalCliques}
	\end{subfigure}\hfill
	\begin{subfigure}[t]{0.30\textwidth}
		\centering
		\includegraphics[width=.85\textwidth,page=7]{figures.pdf}
		\caption{$C_1'$ is the inclusion minimal clique such that $(X\cup Y)\cap C_1\subseteq C_1'\subseteq C_1$ and for all $K\in \CCC_A$ if $K\subseteq C_1$ and $|K\cap C_1'|\ge 2$, then $K\subseteq C_1'$. $C_2'$ and $C_3'$ are defined analogously.}\label{subfig:replacementClique}
	\end{subfigure}\hfill
	\begin{subfigure}[t]{0.34\textwidth}
		\centering
		\includegraphics[width=.7\textwidth,page=8]{figures.pdf}
		\caption{$C_1^1$ intersects $C_1'$ in an edge $e^1_1$. $C_1^1$ is replaced by an edge other than $e_1^1$. This forces to include one edge in $C_1^1$ to a solution $A'$. However, this can be seen as replacing an edge between $\{x,y\}$ and endpoints of $e_1^1$ that is in $A$.}\label{subfig:pendantTriangles}
	\end{subfigure}
	\caption{The treatment of bad triangles. Let $A\subseteq E(G)$ be an optimal solution, $\CCC_A$ a \cliqueWitness\ for $A$. A bad triangle $xyz$ together with cliques $X$, $Y$, $Z$, as defined in Subfigure~\ref{subfig:originalCliques} are replaced by cliques $C_1'$, $C_2'$, and $C_3'$ defined in Subfigure~\ref{subfig:replacementClique}. Subfigure~\ref{subfig:pendantTriangles} shows the treatment of cliques in $\CCC_A$ that intersect $C_i'$ in an edge. By definition of $C_i'$, such clique is not a subclique of $C_i$ and hence a triangle.}\label{fig:noBadTriangle}
\end{figure}

Before we show that indeed all cliques at \distance\ at least $5$ from $S$ are intact in some optimal solution, we show another auxiliary lemma that is rather simple consequence of Lemma~\ref{lem:noBadTriangles}, namely that there is a \cliqueWitness\ for some optimal solution $A$ such that no two cliques $\CCC_A$ that intersect the same clique $C\in \CCC$ at \distance\ at least $4$ from $S$ in an edge can intersect. This is important later to show that indeed no vertex in a clique $C\in \CCC$ at \distance\ $5$ from $S$ will be in two cliques in $\CCC_A$ that are not subsets of $C$.

\begin{lemma}\label{lem:oneVertex}
	There exists an optimal solution $A\subseteq E(G)$ without any bad triangles and \cliqueWitness\ $\CCC_A$ for $G-A$ such that for every $C\in \CCC$ of \distance\ at least $4$ and every $w\in C$, if $C^w_1$ and $C^w_2$ are the two cliques in $\CCC_A$ containing $w$, then either $C^w_1\cap C = \{w\}$ or $C^w_2\cap C = \{w\}$. 
\end{lemma}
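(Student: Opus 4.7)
Starting from an optimal solution $A$ with no bad triangle, as supplied by Lemma~\ref{lem:noBadTriangles}, and its witness $\CCC_A$, I would call a pair $(C,w)$ with $C\in\CCC$, $\distC(C)\ge 4$, and $w\in C$ a \emph{crossing pair} if the two cliques $C_1^w, C_2^w\in\CCC_A$ through $w$ each meet $C$ in at least two vertices. The goal is to drive the number of crossing pairs down to zero by local rearrangements of $\CCC_A$ (and occasional edits to $A$) that never increase $|A|$ and preserve the absence of bad triangles. Termination then follows from the strict decrease of the lexicographic measure given by $|A|$ first and then the number of crossing pairs.

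Fix a crossing pair $(C,w)$. Since $|C_1^w\cap C_2^w|=1$, we can pick $u\in (C_1^w\cap C)\setminus\{w\}$ and $v\in (C_2^w\cap C)\setminus\{w\}$ with $u\ne v$. The strategy is to \emph{merge} the $C$-portions of $C_1^w$ and $C_2^w$ into a single sub-clique of $C$ while keeping $w$'s second clique disjoint from $C\setminus\{w\}$. Set $\hat C:=(C_1^w\cup C_2^w)\cap C$ and repeatedly absorb into $\hat C$ any clique $K\in\CCC_A$ that shares an edge with $\hat C$ and is contained in $C$, exactly as in the closure step from the proof of Lemma~\ref{lem:noBadTriangles}. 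Any remaining clique of $\CCC_A$ that intersects $\hat C$ in an edge but is not contained in $\hat C$ must be a triangle by Observation~\ref{obs:MaximalCliques3}, and since $C$ is at $\distC$ at least $4$, the no-bad-triangle assumption forces this triangle to be a sub-clique of $C$. Each such pendant triangle is re-routed by removing its ``external'' edge from $A$ and adding one of the other two edges, identically to the pendant-triangle treatment in the proof of Lemma~\ref{lem:noBadTriangles}. The new solution $A'$ is then obtained by these edge swaps, and $\CCC_{A'}$ by removing $C_1^w, C_2^w$ together with the absorbed pieces and inserting $\hat C$ together with the outside remainders $C_1^w\setminus\hat C$ and $C_2^w\setminus\hat C$, each meeting $C$ only in $\{w\}$.

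The main obstacle is verifying three properties simultaneously. First, that $\CCC_{A'}$ is a valid \cliqueWitness\ of $G-A'$: pairwise clique intersections of size at most one, every vertex in exactly two cliques, every edge in exactly one. The key structural input here is that by Observation~\ref{obs:MaximalCliques3} and the fact that $|C_1\cap C_2|\le 1$ for any two cliques in $\CCC$, any vertex of $V(G)\setminus S$ that is adjacent in $G-S$ to two distinct vertices of $C$ lies either in $C$ or is the unique possible vertex of $D_u\cap D_v$, where $D_u,D_v$ are the non-$C$ cliques of $\CCC$ through $u,v$; this keeps the outside remainders of $C_1^w,C_2^w$ safely disjoint from $C\setminus\{w\}$. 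Second, that $|A'|\le |A|$, which reduces to counting one $A$-edge absorbed per re-routed pendant triangle exactly as in Lemma~\ref{lem:noBadTriangles}. Third, that no new bad triangle is created: the only candidates are at absorbed pieces or at re-attached pendants, and by construction the surviving edge of each re-routed pendant triangle is placed inside a clique of $\CCC_A$ already contained in a $\CCC$-clique, so no bad triangle appears. Since $(C,w)$ is no longer crossing in $\CCC_{A'}$ and no previously non-crossing pair becomes crossing, the measure strictly decreases and the process terminates at the desired witness.
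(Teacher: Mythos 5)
Your overall strategy (start from an optimal bad-triangle-free solution supplied by Lemma~\ref{lem:noBadTriangles} and drive a potential counting the crossing pairs to zero by local merges) matches the paper's, but the core of your local step has a genuine gap. The decisive fact, which the paper establishes and you never do, is that for a crossing pair $(C,w)$ \emph{both} cliques $C^w_1$ and $C^w_2$ are entirely contained in $C$: each contains an edge of $C$, a clique at \distance\ at least $4$, so by Observation~\ref{obs:MaximalCliques3} and the absence of bad triangles each must be a sub-clique of the unique clique of $\CCC$ covering that edge, namely $C$ (and $C$ is small by Lemma~\ref{lem:replaceLarge}). Your construction instead sets $\hat C=(C^w_1\cup C^w_2)\cap C$ and keeps ``outside remainders'' $C^w_i\setminus\hat C$. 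If those remainders were nonempty the construction would break: $w$ lies in $\hat C$ and not in $C^w_i\setminus\hat C$, so the remainders meet $C$ in $\emptyset$, not in $\{w\}$ as you claim, and every edge of $G-A$ between a vertex of $C^w_i\cap C$ and a vertex of $C^w_i\setminus C$ is left covered by no clique of the new family; repairing that would require deleting those edges, increasing $|A|$. The argument only goes through because the remainders are empty, and that containment is precisely the statement you needed to prove first.

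Second, your pendant-triangle step is both vacuous and misstated. As you yourself note, any clique of $\CCC_A$ meeting $\hat C$ in an edge is forced by the no-bad-triangle hypothesis to be a sub-clique of $C$, hence is absorbed by your own closure rule; no pendant triangles survive, and this is exactly why the paper's proof of this lemma, unlike that of Lemma~\ref{lem:noBadTriangles}, needs no edge swaps at all. Moreover ``removing its external edge from $A$'' is backwards: that edge lies in a clique of $\CCC_A$, hence is present in $G-A$ and is not in $A$; in Lemma~\ref{lem:noBadTriangles} the swap \emph{adds} to the solution one edge of each pendant triangle while removing from $A$ edges lying inside the cliques $C'_i$. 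Once you have $C^w_1,C^w_2\subseteq C$, the correct move is the paper's: merge them, together with everything absorbed by the closure, into a single clique $D\subseteq C$, give $w$ the singleton $\{w\}$ as its second clique, observe that the resulting family is a \cliqueWitness\ for $G-A'$ with $A'\subseteq A$ (hence $A'=A$ by optimality), and check that no new bad triangle or crossing pair is created.
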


\begin{proof}
	Let $A\subseteq E(G)$ be an optimal solution for $G$ without any bad triangles and \cliqueWitness\ $\CCC_A$ for $G-A$ minimizing the number of pairs $(C,w)$ for which $C$ is at \distance\ at least $4$, $w\in C$ and the two cliques, denoted $C^w_1$ and $C^w_2$, in $\CCC_A$ containing $w$ intersect $C$ in two vertices. Furthermore, it follows from Lemma~\ref{lem:replaceLarge} that $C$ is a small clique, as the clique containing $C$ as a subclique in $\CCC_A$ would intersect $C^w_1$ in two vertices. Since there are no bad triangles and $C$ is at \distance\ at least $4$, it follows that $C^w_1\subseteq C$ and $C^w_2\subseteq C$ and in particular $C^w_1\cup C^w_2$ is a clique in $G$. 
	Indeed, our goal is to replace $C^w_1$ and $C^w_2$ by a clique $D$ such that $C^w_1\cup C^w_2\subseteq D\subseteq C$. We start by setting $D=C^w_1\cup C^w_2$. We will also keep a track of cliques we will remove from $\CCC_A$. This set will be $\DDD$ and initialize it as $\DDD=\{C_1,C_2\}$.
	
	As in the proof of Lemma~\ref{lem:noBadTriangles}, the only reason why we cannot replace $C_1$ and $C_2$ by $D$ and obtain a solution that removes a subset of edges of $A$ is because there exist two vertices $v_1, v_2\in D$ and a clique $C_{12}\in \CCC_A$ with $\{v_1,v_2\}\subseteq C_{12}$. Observe that by our assumption there is no bad triangle and $C_{12}\subseteq C$. We let $D=D\cup C_{12}$ and $\DDD = \DDD\cup C_{12}$ and repeat until there is no such pair of vertices. Note that every vertex in $G$ is in at most two cliques of $\CCC_A$. Therefore, this process has to stop after at most $2|C|$ steps. 
	
	When there are no two vertices in $D$ that appear together in a different clique, we remove $\DDD$ from $\CCC_A$ and replace it by $D$ and $\{v\}$. For every vertex that appears in $D$, we removed one clique that it appeared in. Hence, every vertex appears in at most $2$ cliques and we can always add a singleton to \cliqueWitness\ for vertices that are only in one clique. Moreover, no two cliques intersect in two vertices, since $D$ is the only clique we added, and we removed/changed all the cliques that intersected $D$ in at least two vertices. Finally, all edges in $G-A$ remain covered, we only potentially covered some additional edges in $D$. 
	
	Note that this procedure does not introduce any bad triangles or new pair $(C',w')$ for which $C'$ is at \distance\ at least $4$, $w'\in C'$ and the two cliques in $\CCC_A$ containing $w'$ intersect $C'$ in two vertices. As it also removes one such pair, we obtain a contradiction with the choice of $A$. We can therefore deduce that $A$ does not contain such pair $(C,w)$ and the lemma follows.
\end{proof}

Finally, we can state the main lemma of this section. 

\begin{lemma}\label{lem:distance5cliques}
	There exists an optimal solution $A$ for $G$ and a \cliqueWitness\ $\CCC_A$ for $G-A$ such that for every clique $C\in \CCC$ at \distance\ at least $5$ it holds that $C\in \CCC_A$. 
\end{lemma}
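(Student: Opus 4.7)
The plan is an exchange argument. Start with an optimal solution $A$ and a clique partition witness $\CCC_A$ for $G-A$ that simultaneously realises the conclusions of Lemmas~\ref{lem:noBadTriangles} and~\ref{lem:oneVertex}. If some $C\in\CCC$ with $\distC(C)\ge 5$ fails to lie in $\CCC_A$, I will construct a solution $A'$ with $|A'|<|A|$ together with a clique partition witness $\CCC_{A'}$ for $G-A'$ that does contain $C$, contradicting the optimality of $A$.

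First I would dispatch large $C$. By Lemma~\ref{lem:replaceLarge} there is $C'\in\CCC_A$ with $C'\setminus S=C$, so it suffices to force $C'\cap S=\emptyset$. If $s\in C'\cap S$, then $s$ is adjacent to every vertex of $C$ in $G$. By Lemma~\ref{lem:neighborhoodOfS}, the neighbourhood of $s$ outside $S$ is, up to at most $6$ extra vertices, covered by two cliques $B_1,B_2\in\NNN^s\subseteq\LLL_1$. Because $\distC(C)\ge 5$ forces $C\notin\{B_1,B_2\}$, we have $|C\cap(B_1\cup B_2)|\le 2$ and hence $|C|\le 2+6=8$, contradicting $|C|\ge k+7$ as soon as $k\ge 2$ (for $k\le 1$ the whole instance is of constant size and can be disposed of by brute force).

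For small $C$, the no-bad-triangle condition, Observation~\ref{obs:MaximalCliques3} and Lemma~\ref{lem:oneVertex} combine to give a very rigid picture: any clique $K\in\CCC_A$ with $|K\cap C|\ge 2$ is essentially a subclique of $C$ (possibly enlarged by $S$-vertices that are handled separately), and no vertex of $C$ sits inside two such $K$'s. Writing $D_1,\dots,D_t$ for the resulting pairwise disjoint ``internal pieces'' of $C$ and $V_{\mathrm{iso}}=V(C)\setminus\bigcup_i V(D_i)$ for the uncovered vertices, the intended modification replaces $D_1,\dots,D_t$ and the singletons $\{w\}$ for $w\in V_{\mathrm{iso}}$ by the single clique $C$ in $\CCC_A$, and removes $A\cap E(G[C])$ from $A$. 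When $V_{\mathrm{iso}}=\emptyset$ (and no $K$ contains an $S$-vertex), the $D_i$ partition $V(C)$ and the verification is routine: by Lemma~\ref{lem:oneVertex} the ``outside'' clique of $\CCC_A$ through each $w\in C$ meets $C$ only at $w$, so the replacement is a valid clique partition witness, and any $t\ge 2$ strictly reduces $|A|$, contradicting the optimality of $A$.

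The real obstacle is $V_{\mathrm{iso}}\ne\emptyset$ together with the accompanying $S$-vertex anomalies, and this is precisely where $\distC(C)\ge 5$ rather than $\ge 4$ is needed. For every $w\in C$ the other clique $C^w\in\CCC$ through $w$ is either large or satisfies $\distC(C^w)\ge 4$, because if $C^w$ were small at distance at most $3$, the level definition would already place $C$ itself at distance at most $4$. Large $C^w$'s are pinned down by Lemma~\ref{lem:replaceLarge}, whereas small $C^w$'s at distance at least $4$ still satisfy Lemma~\ref{lem:oneVertex}; together with Observation~\ref{obs:withNeighborsInS} (which forces $C^w$ to be large whenever $w$ has a neighbour in $S$), this controls both cliques of $\CCC_A$ through $w$. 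A careful case analysis then shows that every isolated-vertex or $S$-vertex anomaly can be repaired at a cost strictly smaller than the $|C|-1\ge 1$ edges of $A$ inside $C$ that are freed by absorbing the offending vertex into the new $C$-clique, so the aggregate transformation strictly decreases $|A|$. Iterating this argument over all cliques in $\CCC$ at distance at least $5$ completes the proof.
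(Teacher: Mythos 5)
Your overall strategy is the paper's: take an optimal $A$ and a \cliqueWitness\ $\CCC_A$ realising Lemmas~\ref{lem:noBadTriangles} and~\ref{lem:oneVertex}, note that every clique of $\CCC_A$ meeting a \distance-$\ge 5$ clique $C$ in an edge is a subclique of $C$, that the internal pieces are pairwise disjoint, and that for each $w\in C$ the second clique $C^w\in\CCC$ through $w$ is large or at \distance\ at least $4$ (which is exactly why the threshold is $5$ and not $4$), and then exchange. Your explicit treatment of large $C$ (excluding $S$-vertices from the superclique via Lemma~\ref{lem:neighborhoodOfS}) is correct and in fact more careful than the paper, which simply asserts that the offending $C$ is small.

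The gap is in the decisive step for small $C$. Having reduced to $V_{\mathrm{iso}}\neq\emptyset$, you write that ``a careful case analysis shows'' each anomaly can be repaired at a cost strictly smaller than the edges of $A$ inside $C$ that are freed. That case analysis is the entire content of the lemma, and it is never performed; worse, the cost-benefit framing points the wrong way, because the repair must be \emph{free}, not merely cheaper than the gain. The paper's closing argument is concrete: by optimality the internal pieces form a single clique $C_1$ (otherwise replacing them by their union already improves $A$); for $v\in C\setminus C_1$, since $v$ has no neighbour in $S$ and both $\CCC_A$-cliques through $v$ meet $C$ only in $v$, both are contained in $C^v$; since $C^v$ is large (Lemma~\ref{lem:replaceLarge}) or small at \distance\ at least $4$ (Lemma~\ref{lem:oneVertex}), one of the two must meet $C^v$ only in $v$ and is therefore the singleton $\{v\}$; replacing $C_1$ and $\{v\}$ by $C_1\cup\{v\}$ deletes from $A$ all edges between $v$ and $C_1$ and strictly improves the solution. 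Without deriving that one of the two cliques through $v$ is a singleton, you cannot absorb $v$ into the new copy of $C$ at all ($v$ would lie in three cliques of the witness), so the exchange as you describe it does not go through. You list all the ingredients needed for this deduction, but the deduction itself is the missing idea.
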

\begin{proof}
	Let $A$ be an optimal solution without any bad triangles and \cliqueWitness\ $\CCC_A$ for $G-A$ such that for every $C\in \CCC$ of \distance\ at least $4$ and every $w\in C$, if $C^w_1$ and $C^w_2$ are the two cliques in $\CCC_A$ containing $w$, then either $C^w_1\cap C = \{w\}$ or $C^w_1\cap C = \{w\}$. Note that existence of such a solution is guaranteed by Lemma~\ref{lem:oneVertex}. Moreover let $(A,\CCC_A)$ be such an optimal solution satisfying properties in Lemma~\ref{lem:oneVertex} that minimizes the number of cliques $C\in\CCC$ of \distance\ at least $5$ such that $C\notin\CCC_A$. We claim that $A$ satisfies the properties of the lemma. 
	
	For a contradiction let $C\in \CCC$ be a clique at \distance\ at least $5$ and let $C_1,\ldots, C_p$ be the cliques in $\CCC_A$ that intersects $C$ in at least $2$ vertices.
	Since there is no bad triangle, it follows that $C_i\subseteq C$ for all $i\in [p]$ and by optimality of $A$, $p=1$ (else $\bigcup_{i\in[p]}C_i$ is missing at least one edge). We claim that $C=C_1$. Else let $v\in C\setminus C_1$. Note that $C$ is a small clique and hence by Observation~\ref{obs:withNeighborsInS} $v$ does not have a neighbor in $S$. In particular all neighbors of $v$ are covered by two cliques in $\CCC$, one of those cliques is $C$ and let the other clique be $C^v$. Moreover, Let $C^v_1$ and $C^v_2$ be the two cliques in $\CCC_A$ containing $v$. Since $v\in C\setminus C_1$ both $C^v_1$ and $C^v_2$ are subsets of $C^v$. However, $C^v$ is either a large clique and $\CCC_A$ contains $C^v$ and the cliques $C^v_1$ and $C^v_2$ are $C^v$ and $\{v\}$ respectively, or $C^v$ is a small clique, in which case $C^v$ is at \distance\ at least $4$ from $S$, because it shares a vertex with the clique $C$ at \distance\ at least $5$ from $S$. It follows by the choice of $A$ that either $C^v\cap C^v_1=\{v\}$ or $C^v\cap C^v_2=\{v\}$, but then again either $C^v_1$ or $C^v_2$ is the singleton $\{v\}$. However then the \cliqueWitness\ $(\CCC_A\setminus \{C_1,\{v\}\})\cup \{C_1\cup\{v\}\}$ defines a better solution. It follows that indeed $C\in \CCC_A$ for all cliques in $\CCC$ at \distance\ at least $5$ in $G$.   
\end{proof}

We are now ready to present our main reduction rule. Note that it would seem that we could remove just the vertices that do no appear in a clique at distance at most $4$. However, because of the large cliques in at the first four levels, we would be potentially left with many cliques at \distance\ infinity that we cannot remove because all of their vertices are in a large clique at \distance\ at most $4$ from $S$. While this case could have been dealt with separately, we can actually show a stronger claim, \emph{i.e.,} that we can remove all edges from $G$ that are covered by a clique at \distance\ at least $5$ from $S$. Note that in this case we cannot easily claim that if $(G,k)$ is YES-instance then so is the reduced instance and we crucially need the fact that cliques at \distance\ at least $5$ are kept in \cliqueWitness\ of some optimal solution.

\begin{rrule}\label{rrule:removeFarVertices}
	Remove all edges $uv\in E(G)$ such that $\{u,v\}\subseteq C$ for some clique $C$ with $\distC(C)\ge 5$. Afterwards remove all isolated vertices from $G$.	
\end{rrule}

Let $\DDD$ be the set of cliques at \distance\ at least $5$ from $S$, $V_5$ the set of vertices that appear in a clique in $\DDD$ and in a clique in $\CCC\setminus \DDD$ and $G'$ be the graph obtained after applying the reduction rule and let $\CCC'=(\CCC\setminus \DDD)\cup \bigcup_{v\in V_5}\{v\}$. Note that $\CCC'$ is a \cliqueWitness\ for $G'-S$ and that $\{v\}$, for $v\in V_5$, is a clique at \distance\ at least $5$.  

\begin{proof}[Proof of safeness]
Let $\DDD$, $V_5$, $G'$, $\CCC'$ be as described above and let $A$ be an optimal solution for $G'$, that is $G'-A$ is a line graph, and let $\CCC_A$ be \cliqueWitness\ for $G'-A$. By Lemma~\ref{lem:distance5cliques}, we can assume that $\bigcup_{v\in V_5}\{v\}\subseteq \CCC_A$. We will show that $(\CCC_A\setminus \bigcup_{v\in V_5}\{v\})\cup \DDD$ is a \cliqueWitness\ for $G-A$.
	Clearly each edge in $G-A$ is either covered by $(\CCC_A\setminus \bigcup_{v\in V_5}\{v\})$ or by $\DDD$. It is also easy to see that every vertex is in precisely two cliques. Moreover, two cliques in $\DDD$ intersect in at most $1$ vertex, because $\DDD\subseteq \CCC$ and similarly two cliques in $\CCC_A$ intersect in at most one vertex. Finally, let $D\in \DDD$ and $C\in (\CCC_A\setminus\bigcup_{v\in V_5}\{v\})$. Clearly, $D\cap C\subseteq V_5$. Moreover, for $\{u,v\}\subseteq D$, the edge $uv$ is not in $G'$ and hence $\{u,v\}\not\subseteq C$. Hence, $|D\cap C|\le 1$. 
	
	On the other hand, 
	let $A$ be an optimal solution for $G$ and a \cliqueWitness\ $\CCC_A$ for $G-A$ 
	such that for every clique $C\in \CCC$ at \distance\ at least $5$ it holds that 
	$C\in \CCC_A$.  Note that the existence of $(A,\CCC_A)$ is guaranteed by  
	Lemma~\ref{lem:distance5cliques}. We claim that $G'-A$ is a line graph. By the choice of $(A,\CCC_A)$, it follows that $\DDD\subseteq \CCC_A$. Moreover, for every edge $e$ that is covered by a clique in $\DDD$ it holds that $e\notin E(G')$. It follows rather straightforwardly that $\CCC_A\setminus \DDD\cup \bigcup_{v\in V_5}\{v\}$ is indeed a \cliqueWitness\ for $G'-A$. 
\end{proof}

\section{Finishing the Proof}\label{sec:boundingSize}
Suppose now that $G$, $S$, and $\CCC$ correspond to the instance after applying Reduction Rules~\ref{rrule:largeSeparatorRemoval}~and~\ref{rrule:removeFarVertices}. Clearly all cliques in $\CCC$ are either at \distance\ at most $4$ from $S$ or there are singletons at distance $5$ or infinity, depending on whether the singleton intersects a small or a large clique, respectively. It follows from Lemma~\ref{lem:cliquesAtDistanceD} that there are at most $\bigO{k^4}$ cliques at distance at most $4$. We let $M$ be any minimal w.r.t. inclusion set of vertices such that for every clique $C$ in $\CCC$ at \distance\ at most $4$ it holds that $|M\cap C|\ge \min\{|C|,k+7\}$. Such a set $M$ can be easily obtained by including arbitrary $\min\{|C|,k+7\}$ vertices from every clique $C$ at distance at most $4$ and then removing the vertices $v$ such that $|(M\setminus \{v\})\cap C|\ge \min\{|C|,k+7\}$ for all $C\in \CCC$ at \distance\ at most $4$. From this construction it is easy to see that $|M|=\bigO{k^5}$.

\begin{rrule}\label{rrule:largeCliques}
	Remove all vertices in $V(G)\setminus (S\cup M)$ from $G$.
\end{rrule}
\begin{proof}[Proof of safeness]
	Let the \cliqueWitness\ $\CCC'$ for $G-(S\cup M)$ be $\{C\cap M\mid C\in \CCC, C\cap M\neq\emptyset\}$.
	Since line graphs are characterized by a finite set of forbidden induced subgraphs, it is easy to see that if $G-A$ is a line graph, for some $A\subseteq E(G)$, then $G[S\cup M]-A=(G-A)[S\cup M]$ is also a line graph. For the other direction, 
	let $A\subseteq E(G)$ be such that $G[S\cup M] - A$ is line graph. We will show that $G- A$ is a line graph. Let $\CCC_A$ be a \cliqueWitness\ for $G[S\cup M]-A$. 
	Now let $\CCC'_A$ be the set we obtain from $\CCC_A$ by adding to it all the singleton cliques in $\CCC$ that do not contain a marked vertex and for every clique $C\in \CCC_A$ for which there exists $C'\in \CCC$ with $C\setminus S\subseteq C'$, we replace $C$ by $C'\cup (C\cap S)$. 
	
	First let us verify that every vertex in $V(G)$ is in precisely two cliques in $\CCC'_A$. It is easy to see that this holds for $v\in S\cup M$, because $\CCC_A$ is a \cliqueWitness\ for $G[S\cup M]-A$ and we only added new cliques containing vertices in $V(G)\setminus (M\cup S)$ or extended existing cliques in $\CCC_A$ by vertices in $V(G)\setminus (M\cup S)$. 
	Now let $v\in V(G)\setminus M$ and let $C_1,C_2\in \CCC$ be two cliques that contain $v$. Because all cliques in $\CCC$ at \distance\ at least $5$ are singletons and we keep all vertices of the cliques at \distance\ at most $4$ of size less than $k+7$, it follows that $C_1$ and $C_2$ either both contain at least $k+7$ vertices or one of them, say $C_2$, is a singleton and the other, $C_1$, contains at least $k+7$ vertices. If $C_2$ is a singleton, then $C_2\in \CCC'_A$. Else for $C_i$, $i\in \{1,2\}$, with $|C_i|\ge k+7$ there is $C_i'\in \CCC'$ with $|C_i'|\ge k+7$ and $C_i'\subseteq C_i$. 	By Lemma~\ref{lem:replaceLarge}, $\CCC_A$ contains a clique $C_i^A$ such that $C_i^A\setminus S= C_i'\setminus C_i$. By the construction of $\CCC'_A$ it now follows that $\CCC_A'$ contains $C_i^A\cup C_i$. From Lemma~\ref{lem:neighborhoodOfS} it follows that if $u\in S$ is adjacent to at least $7$ vertices in a clique in $\CCC$, then it is adjacent to the whole clique. Hence $C_i^A\cup C_i$ indeed induces a complete subgraph of $G-A$. It follows that $v$ is indeed in precisely two cliques in $\CCC_A'$. Note that above also shows that the sets in $\CCC_{A}'$ induce cliques in $G-A$. Furthermore every edge in $G-A$ either has both endpoints in $S\cup M$ and are covered by a clique $C$ in $\CCC_A$ such that $\CCC_A'$ contains a superset of $C$, or they are in the same clique of size at least $k+7$ in $\CCC$ that is a subset of a clique in $\CCC_A'$ as well. 
	
	 It remains to show that $|C_1\cap C_2|\le 1$ for all cliques in $\CCC'_A$.
	 If $|C_1\cap C_2|\ge 2$, then at least one of the vertices in $C_1\cap C_2$ 
	 has to be outside $S\cup M$. But then from the above discussion follows that 
	 $C_1\setminus S$ and $C_2\setminus S$ are in $\CCC$, $|C_1\setminus S|\ge 
	 k+7$, $|C_2\setminus S|\ge k+7$ and at least $k+7$ vertices from each of 
	 $C_1\setminus S$ and $C_2\setminus S$ are in $G[S\cup M]$. Clearly, 
	 $C_1\setminus S$ and $C_2\setminus S$ intersect in at most one vertex, let us 
	 denote it $u$, and the other vertices in the intersection of $C_1$ and $C_2$ 
	 are in $S$. Let $v$ be arbitrary vertex in $C_1\cap C_2\cap S$. Note that $v$ 
	 is adjacent to at least $7$ vertices in both $C_1\setminus S$ and 
	 $C_2\setminus S$ and by Lemma~\ref{lem:neighborhoodOfS} it is adjacent to all 
	 vertices in $(C_1\cup C_2)\setminus S$. Since $G-(S\setminus \{v\})$ is a line 
	 graph, it follows that $G[(C_1\cup C_2)\setminus (S\setminus \{v\})]$ is a 
	 line graph. Every vertex in $C_1\setminus (S\cup \{u\})$ is in exactly one 
	 other clique in $\CCC$. This clique intersects $C_2\setminus (S\cup \{u\})$ in 
	 at most one vertex. Therefore, there is a pair of vertices $w_1\in 
	 C_1\setminus (S\cup \{u\})$, $w_2\in C_2\setminus (S\cup \{u\})$ such that 
	 $w_1w_2\notin E(G)$. Now $uvw_1$ and $uvw_2$ are two odd triangles (any vertex 
	 in $C_i\setminus (S\cup \{u,w_i\})$ is adjacent to three vertices of the 
	 triangle $uvw_i$) that share a common edge, however $uvw_1w_2$ is not a $K_4$. 
	 Hence, $G[(C_1\cup C_2)\setminus (S\setminus \{v\})]$ is not a line graph, a 
	 contradiction. It follows that if two cliques in $\CCC$ of size at least $k+7$ 
	 intersect in a vertex in $G-S$, then no vertex in $S$ is adjacent to both 
	 cliques and consequently no two cliques in $\CCC'_A$ intersect in at least two 
	 vertices.   
	 
	 It follows that $\CCC_A'$ is indeed a \cliqueWitness\ for $G-A$ and by point (2) in Theorem~\ref{thm:LGcharacterization}, $G-A$ is indeed a line graph.
\end{proof}

We are now ready to prove Theorem~\ref{thm:main}.

\polykernelthm*

\begin{proof}
	We start the algorithm by finding the set $S$ of at most $6k$ vertices such that for every $v\in S$ the graph $G-(S\setminus \{v\})$ is a line graph. 
	This is simply done by greedily finding maximal set of pairwise edge-disjoint forbidden induced subgraphs. 
	Afterwards, we construct a \cliqueWitness\ $\CCC$ for $G-S$ by using the algorithm of Lemma~\ref{lem:constructingWitness}.
	Finally, we apply Reduction Rules~\ref{rrule:largeSeparatorRemoval}, \ref{rrule:removeFarVertices}, and \ref{rrule:largeCliques} in this order. By the discussion above Reduction Rule~\ref{rrule:largeCliques}, after applying all the reduction rules, the resulting instance has $\bigO{k^5}$ vertices. The correctness of the kernelization algorithm follows from the safeness proofs of the reduction rules.
\end{proof}

\section{Concluding Remarks}

In this paper, we positively answered the open question from WorKer 2013 about kernelization of \ldelete\ by giving a kernel for the problem with $\bigO{k^5}$ vertices. Our techniques crucially depend on the structural characterization of the line graphs. We believe that similar techniques could lead also to polynomial kernels for \textsc{Line-Graph-Edge Addition} and \textsc{Line-Graph-Edge Editing}. In particular, a result similar to Lemma~\ref{lem:replaceLarge} still holds when we allow addition of the edges. However, we were not able to bound the distance from $S$. Main difficulty seems to be the possibility of merging of some small cliques into one in a \cliqueWitness. It is also worth noting that the line graphs of multigraphs (\emph{i.e.,} graphs that allow multiple edges between the same pair of vertices) have a similar structural characterization with the main difference being that the cliques in a \cliqueWitness\ can intersect in more than just one vertex. The kernelization of the edge deletion (as well as addition or editing) to a line graph of a multigraph remains open as well. Finally, the kernelization of \textsc{Claw-free Edge Deletion} as well as of the edge deletion to some of the other natural subclasses of claw-free graphs remain wide open.

\bibliography{paw-free}
\end{document}